\newtheorem{theorem}{Theorem}
\newtheorem{lemma}{Lemma}
\newtheorem{prop}{Proposition}
\newtheorem{remark}{Remark}
\newcommand{\eor}{\ensuremath{\hfill\blacklozenge}}
\providecommand{\norm}[1]{\left\lVert#1\right\rVert}
\newcommand{\J}{{\cal J}}
\newcommand{\bbm}{\begin{bmatrix}}
\newcommand{\ebm}{\end{bmatrix}}
\begin{document}

\title{Dual Unscented Kalman Filter Architecture for Sensor Fusion in Water Networks Leak Localization}

\author{Luis~Romero-Ben,    Paul~Irofti,~\IEEEmembership{Member,~IEEE,}
Florin~Stoican,~\IEEEmembership{Member,~IEEE,}
and~Vicenç~Puig
\thanks{Luis Romero-Ben and Vicenç Puig are with the Institut de Robòtica i Informàtica Industrial, CSIC-UPC,
e-mail: \{luis.romero.ben,vicenc.puig\}@upc.edu}
\thanks{Paul Irofti is with LOS-CS-FMI,
University of Bucharest, 
e-mail: {paul@irofti.net}}%

\thanks{Vicenç Puig is also with the Supervision, Safety and Automatic Control Research Center (CS2AC) of the Universitat Politècnica de Catalunya
}
\thanks{Florin Stoican is with the  Dept. of Automation Control and Systems Engineering, Politehnica University of Bucharest,
e-mail: {florin.stoican@upb.ro}.}
}



\maketitle

\begin{abstract}
Leakage in water systems results in significant daily water losses, degrading service quality, increasing costs, and aggravating environmental problems. Most leak localization methods rely solely on pressure data, missing valuable information from other sensor types. This article proposes a hydraulic state estimation methodology based on a dual Unscented Kalman Filter (UKF) approach, which enhances the estimation of both nodal hydraulic heads, critical in localization tasks, and pipe flows, useful for operational purposes. The approach enables the fusion of different sensor types, such as pressure, flow and demand meters. The strategy is evaluated in well-known open source case studies, namely Modena and L-TOWN, showing improvements over other state-of-the-art estimation approaches in terms of interpolation accuracy, as well as more precise leak localization performance in L-TOWN.
\end{abstract}

\begin{IEEEkeywords}
leak localization, water distribution network, state estimation, interpolation
\end{IEEEkeywords}



\section{Introduction}
\IEEEPARstart{W}{ater} resources are becoming increasingly scarce while global demand continues to rise, with projections estimating a 55\% increase between 2000 and 2050 \cite{Leflaive2012}.  
The efficient delivery of water minimizes losses, but leaks remain a major challenge for water utilities. 
Globally, leakage is estimated to produce losses of around 126 billion cubic meters of water per year \cite{Liemberger2019}. 
Water utilities have a great interest in developing leak detection/localization algorithms to minimize repair times, service disruptions, and water losses in water distribution networks (WDNs). Traditionally, leaks have been detected through night flow analysis and localized using portable acoustic sensors. However, this is inefficient for large-scale systems, as inspection is limited to reduced sections of the WDN at a time. This limitation has driven the demand for software-based solutions that can automatically analyze sensor data using models and algorithms, enabling simultaneous real-time, system-wide monitoring. 

Most leak localization methods are model-based, relying on hydraulic models to simulate the behaviour of the WDN. Simulated hydraulic data is then compared to actual measurements to localize the leak. Common approaches include analyzing pressure sensitivity to leaks 
\cite{Perez2014} 
or solving the inverse hydraulic problem, applying optimization techniques to calibrate network parameters or nodal consumption from pressure and flow data \cite{Sophocleous2019}. 
Model-based methods can achieve high accuracy if the hydraulic model is well-calibrated, but this is a challenging task given the complexity of networks, 
mathematical models 
and potential modeling errors. 

Advancements in data analysis and machine learning helped to reduce the reliance on well-calibrated hydraulic models while achieving comparable node-level accuracy\footnote{In the context of leak localization, accuracy can be classified as ``node-level'', which implies identifying the exact network element with the leak, or ``area-level'', where a broader search area is indicated.} to model-based methods, leading to the development of mixed model-based/data-driven methods. They typically use the model to generate training samples for the potential leaks at different network conditions, to then learn from these data. Common strategies include the use of artificial neural networks (ANN) \cite{Caputo2003}, support vector machine (SVM) \cite{Zhang2016} and deep learning \cite{Zhou2019}. 

Recently, the interest in removing the necessity of a hydraulic model has driven the advancement in data-driven methods, which avoid model-related problems at the cost of a high dependence on sensor readings. Most data-driven approaches exploit the structure of the network, integrating graph-related properties \cite{Alves2021} or methods such as interpolation \cite{Soldevila2020,Boatwright2023}. Their promising performance inspired us to explore interpolation-based leak localization, producing methods such as Graph-based State Interpolation (GSI) \cite{RomeroBen2022}, which only requires hydraulic head readings from distributed sensors and structural network information (pipe length and connectivity). This method was further extended to derive the interpolation weights from an approximated head-to-head relation based on the Hazen-Williams equation, leading to Analytical Weighting GSI (AW-GSI) \cite{Irofti2023}. 

Most of the mentioned methodologies, developed in previous works, rely solely on pressure sensors due to their lower cost, easier installation, potential use in network control 
and the higher leak sensitivity of pressure with respect to flow \cite{Santos2022}. However, other sensor types might be present in the WDN, such as flow sensors (typically installed in water inlets) or demand sensors, normally denoted as Automated Meter Reading (AMR) \cite{Ali2022}. Therefore, a natural step to improve state interpolation involves integrating all the available measurements in the WDN. To address the limitations of GSI and AW-GSI, which cannot handle the non-linear relation between heads and flow/demand, we recently proposed an upgrade of these methods that fuses pressure and demand information through an Unscented Kalman Filter (UKF) \cite{Julier1997}, referred to as UKF-(AW)GSI \cite{RomeroBen2024b}. 

\noindent\textbf{Contributions}: UKF-(AW)GSI models the network state using nodal hydraulic heads, given their importance during leak localization. However, other hydraulic variables such as flow and nodal demand are also of interest for water utilities, not only for leak management but also for network operational control. 
While flows and demands can be computed from hydraulic heads using the Hazen-Williams and mass conservation equations, the higher number of pipes relative to nodes means that a certain head vector can be obtained from multiple flow solutions. As a result, flow and demand estimations obtained from the head states may not accurately represent the actual conditions within the network. To address this issue, we propose an extension of UKF-(AW)GSI that integrates flow and head estimation together. This method, denoted as Dual UKF-(AW)GSI or D-UKF-(AW)GSI, runs two separate estimation processes: one considers heads as the state, while the other treats flows as the state. These processes are linked through virtual measurements, so that the state of one estimator is used to feed the measurement vector of the other. This allows both processes to yield a consistent dual solution for heads and flows. The performance of our method is tested over the Modena \cite{Bragalli2012} and L-TOWN benchmarks \cite{Vrachimis2022}. 

\noindent\textbf{Notation}: We denote scalars using plain lowercase letters, whereas vectors and matrices are represented in bold, using lowercase and uppercase letters respectively. Considering an arbitrary matrix $\bm{A}$, its $i$-th column is denoted as $\bm{a}_i$, whereas $a_{ij}$ indicates the $i$-$j$ element.  Nevertheless, sub-indices $h$ and $q$ indicate that the corresponding variable belongs either to the head or flow estimation process respectively. Within estimation processes, the use of a hat indicates approximation, such as $\bm{\hat{A}}$ being the approximation of $\bm{A}$. Other important notations include $\left(\bm{x}\right)^{y}$, which means that each component of vector $\bm{x}$ is raised to the power of $y$, and $|\cdot|$, which denotes cardinality if applied to sets, and absolute value otherwise.

\noindent\textbf{Outline}: The rest of the paper is organized as follows. Section \ref{section:preliminaries} introduces preliminary key concepts regarding the proposed estimation method. Section \ref{section:methodology} delves into the contributions of this article, starting with a step-by-step analysis of the applicability of Kalman-based filters to the problem at hand, to then explain the details of the proposed sensor fusion methodology, which improves both head and flow estimation. Section \ref{section:case_study} presents the utilized case study, which is an open-source benchmark known as L-TOWN. The obtained results are detailed in Section \ref{section:results}, exploring the performance of the proposed methodology in terms of both state estimation and leak localization capabilities. Finally, Section \ref{section:conclusions} presents the conclusions and outlines several future worklines.

\section{Preliminaries}\label{section:preliminaries}

\subsection{Hydraulic state estimation}\label{subsection:hydraulic_state_estimation}

Within the context of leak localization, the estimation of the hydraulic state of a WDN consists in using hydraulic information available from existing network-wide sensors to retrieve the state of unknown elements. To characterize a WDN, let us consider its modeling through a simple, connected and directed graph $\mathcal{G}=(\mathcal{V},\mathcal{E})$, where $\mathcal{V}$ denotes the set of nodes, i.e., reservoirs and junctions, and $\mathcal{E}$ stands for the set of edges, i.e., pipes.  The sizes of these sets are denoted as $n=\lvert \mathcal{V}\rvert$ and $m = \lvert \mathcal{E}\rvert$. The $i$-$th$ node of the WDN is expressed as $\mathscr{v}_i\in\mathcal{V}$, whereas the $k$-$th$ edge is denoted as $\mathscr{e}_k = (\mathscr{v}_i,\mathscr{v}_j)\in\mathcal{E}$, with $\mathscr{v}_i$ as its source and $\mathscr{v}_j$ as its sink.

\subsubsection{Graph-based State Interpolation (GSI)}\label{subsubsection:GSI}

A well-known solution for addressing data-driven state estimation is Graph-based State Interpolation (GSI) \cite{RomeroBen2022}.
This method uses the hydraulic head (pressure + elevation) at the network nodes as a proxy for the network hydraulic state,
which is recovered from distributed pressure sensor readings from certain network nodes and the network structure, represented by nodal connectivity and pipes lengths. 
In WDNs, the relation between hydraulic heads of adjacent nodes and the flow between them is defined by a non-linear expression such as the Hazen-Williams formula \cite{Brater1996}:
$b_{ki}(h_i-h_j) = \tau_{k}q_k^{1.852}$,
where $h_i$ and $h_j$ are respectively the $i$-$th$ and $j$-$th$ entries of the hydraulic state vector $\bm{h}\in \mathbb{R}^{n}$ and $b_{ki}$ is the element of the incidence matrix $\bm{B}\in \mathbb{R}^{m\times n}$ denoting that node $\mathscr{v}_i$ is the source ($b_{ki}=1$) or the sink ($b_{ki}=-1$) in pipe $\mathscr{e}_k=(\mathscr{v}_i,\mathscr{v}_j)$. Consequently, we have that $b_{kj}= -b_{ki}$, with the rest of elements of the $k$-th row of $\bm{B}$ being zero. 
The resistance coefficient
$\tau_{k}=(10.67 \rho_{k})/(\mu_{k}^{1.852} \delta_{k}^{4.87})$
corresponds to $\mathscr{e}_k$
such that $\rho_{k}, \delta_{k}$ and $\mu_{k}$ are the pipe length, diameter and roughness respectively.
The variable $q_k$ is the flow traversing $\mathscr{e}_k$,
defined here as always non-negative,
where $1.852$ is the Hazen-Williams flow exponent.

The core idea of GSI is the relaxation of the Hazen-Williams equation
by substituting it with the weighted linear expression
$\hat{h}_i = \bm{w}^{\top}_i\bm{\hat{h}} / d_i$,
\noindent where $\bm{\hat{h}}\in \mathbb{R}^{n}$ is the approximated state vector of hydraulic heads. The row vector $\bm{w}^{\top}_i$ is the $i$-$th$ row of the weighted adjacency matrix $\bm{W}\in \mathbb{R}^{n \times n}$, encoding the relation between adjacent nodes, defined here as $w_{ij} = \frac{1}{\rho_{k}}$. Additionally, $d_i$ is the $i$-$th$ diagonal entry of the degree matrix $\bm{D}\in \mathbb{R}^{n \times n}$, where $d_i=\sum_{j=1}^{n} w_{ij}$. 
GSI aims to obtain $\bm{\hat{h}}$ 
by solving the following optimization problem \cite{RomeroBen2022}
\begin{align}\label{eq:GSI_opt}
\min_{\bm{\hat{h}}, \gamma} \quad & \frac{1}{2}\big[\bm{\hat{h}}^T\bm{L}\bm{D}^{-2}\bm{L}\bm{\hat{h}}+\zeta \gamma ^2\big],\\
\label{eq:GSI_opt_b}\textrm{s.t.} \quad & \bm{B} \bm{\hat{h}}\leq \gamma \cdot\bm{1}_{n},\ \gamma > 0,\ \bm{S}\bm{\hat{h}}=\bm{\hat{h}}^{s},  
\end{align}
\noindent where $\bm L=\bm D - \bm W$ is the Laplacian of $\mathcal G$, $\bm S \in \mathbb{R}^{n_{s}\times n}$ is the sensorization matrix ($n_s$ is the number of pressure sensors) that extracts the state values at the sensors, $\bm{\hat{h}}^{s}$ is the vector of head measurements and $\zeta$ is a constant weighting the relative importance between the two sub-objectives. The first sub-objective pursues the minimization of $\sum_{i=1}^{n}\Big[\hat{h}_i - \bm{w}^{\top}_i\bm{\hat{h}} / d_i\Big]^2$, whereas the second is connected to the constraints through the positive slack variable $\gamma$, in order to ensure a well-represented directionality of water flowing through the network.

\subsubsection{Analytical Weighting Graph-Based State Interpolation (AW-GSI)}

Starting from the GSI formulation
where the head measurements are
based on the resulting least-squares weights from \eqref{eq:GSI_opt},
AW-GSI~\cite{Irofti2023} obtains a set of analytical weights
that better approximate the heads in the network
through a local linearization of the Hazen-Williams equations.

Indeed,
given a network node $h_i$
there exists a local neighborhood $\J$,
where the neighbor heads are grouped in
$\bm{h}_{\J}=\bbm \dots & h_j & \dots\ebm^\top\in \mathbb R^{|\J|}$,
such that there is a direct dependence between it and its neighbors head values $h_i = g(\bm{h}_\J)$.
The central results in~\cite{Irofti2023}
show that we can write this as
$h_i=\mathbf g_i(\bm{h}_{\J})\approx \bar{h}_i+\sum\limits_{j:\: b_{ij}\neq 0}\eta_{ij}\cdot (h_j-\bar{h}_j),$
where the weights $\eta_{ij}$ are given by the gradient $\nabla g$ as follows: 
$\eta_{ij}=(\tau_{k}^{-0.54} \left|\bar{h}_i-\bar{h}_j\right|^{-0.46})/(\sum\limits_{u:\:  i \sim u} \tau_{k}^{-0.54}\left|\bar{h}_i-\bar{h}_u\right|^{-0.46})$,
\noindent where we denoted with $\bm{{\bar{h}}}$ a state vector following
Hazen-Williams,
and $i \sim u$ indicates that $\mathscr{v}_i$ and $\mathscr{v}_u$ are adjacent. 
The results~\cite{Irofti2023} showed significant improvements when using these analytical weights compared to the standard GSI weight.
In current work, we propose to improve these further through sensor fusion techniques
coupled with the proposed dual UKF architecture.

\subsection{Sensor fusion}\label{subsection:sensor_fusion}

The effectiveness of the methods described in Section \ref{subsection:hydraulic_state_estimation} depends not only on the number of sensors installed but also on their ability to integrate different types of measurements. 
Kalman Filter (KF) based methods are among the most widely used strategies for sensor fusion, due to their proven efficacy in solving a broad range of problems across several disciplines. 

\subsubsection{Kalman Filter}\label{subsubsection:KF}

The Kalman filter \cite{Kalman1960} is a well-known algorithm in control theory for state estimation in linear dynamical systems given in state-space form, which operates by recursively incorporating noisy measurements over time. The optimality of these estimates holds under specific assumptions, primarily the linearity of the system and the Gaussian distribution of the noise. Consider a linear system as follows: 
$\bm{x}^{[k]} = \bm{F}\bm{x}^{[k-1]} + \bm{B}^u\bm{u}^{[k-1]} + \bm{w}^{[k-1]}$,
$\bm{y}^{[k]} = \bm{G}\bm{x}^{[k]} + \bm{v}^{[k]}$,
where $\bm{x}^{[k-1]}$ is the state vector (at the $k$-th iteration), $\bm{F}$ is the state transition matrix, $\bm{B}^u$ is the control-input matrix, $\bm{u}^{[k-1]}$ is the control-input vector and $\bm{w}^{[k-1]} \sim \mathcal{N}(\bm{0},\bm{Q})$ is the process noise vector, following a zero-mean Gaussian distribution with covariance matrix $\bm{Q}$. Moreover, $\bm{y}^{[k]}$ is the measurement vector, $\bm{G}$ is the measurement matrix and $\bm{v}^{[k]}\sim \mathcal{N}(\bm{0},\bm{R})$ is the measurement noise vector following a Gaussian distribution with zero mean and covariance matrix $\bm{R}$. 

The Kalman filter algorithm includes the \textit{prediction} step, modelling the dynamical evolution of the system state, and the \textit{measurement update} step, which corrects the state via measurements. The \textit{prediction} step is defined through the equations:
$\bm{\hat{x}}^{[k]}_{-} = \bm{F}\bm{\hat{x}}^{[k-1]} + \bm{B}^u\bm{u}^{[k-1]}$,
$\bm{\hat{P}}^{[k]}_{-} = \bm{F}\bm{\hat{P}}^{[k-1]}\bm{F}^{\top} + \bm{Q}$,
where $\bm{\hat{x}}^{[k-1]} = \mathbb{E}\left[\bm{x}^{[k-1]}\right]$ is the estimated state and $\bm{\hat{P}}^{[k-1]} = \mathbb{E}\left[\left(\bm{x}^{[k-1]} - \bm{\hat{x}}^{[k-1]}\right)\left(\bm{x}^{[k-1]} - \bm{\hat{x}}^{[k-1]}\right)^{\top}\right]$ is the state error covariance matrix, which quantifies the estimation uncertainty. The \textit{measurement update} step is defined by:
$\bm{K}^{[k]} = \bm{\hat{P}}^{[k]}_{-}\bm{G}^{\top}\left(\bm{G}\bm{\hat{P}}^{[k]}_{-}\bm{G}^{\top} + \bm{R}\right)^{-1}$,
$\bm{\hat{x}}^{[k]} = \bm{\hat{x}}^{[k]}_{-} + \bm{K}^{[k]}\left(\bm{y}^{[k]} - \bm{G}\bm{\hat{x}}^{[k]}_{-}\right)$,
$\bm{\hat{P}}^{[k]} = \left(\bm{I}_n - \bm{K}^{[k]}\bm{G}\right)\bm{\hat{P}}^{[k]}_{-}$,
where $\bm{K}^{[k]}$ is the Kalman gain, and $\bm{y}^{[k]}$ is the actual measurement vector, containing the sensor readings.

\subsubsection{Non-linear Kalman Filter}

Many processes cannot be represented through linear models. This is the case for WDNs, where, as shown in Section~\ref{subsubsection:GSI},
the relationship between flow and head is non-linear. Thus, exploring extensions of the Kalman Filter to non-linear systems is of particular interest.

\paragraph{Extended Kalman Filter}

A widely employed alternative method is the Extended Kalman Filter (EKF) \cite{Kalman1961}, which deals with non-linear systems by linearizing the process and measurement models at the current estimated state.
A non-linear system can be given by 
$\bm{x}^{[k]} = \mbox{\textbf{f}}(\bm{x}^{[k-1]},\bm{u}^{[k-1]}) + \bm{w}^{[k-1]}$
and
$\bm{y}^{[k]} = \mbox{\textbf{g}}(\bm{x}^{[k]}) + \bm{v}^{[k]}$,
\noindent where $\mbox{\textbf{f}}$ and $\mbox{\textbf{g}}$ are respectively the non-linear process and measurement functions. 
The EKF computes the Jacobian matrices of $\mbox{\textbf{f}}$ and $\mbox{\textbf{g}}$ at each time step
$\bm{F}^{[k]} = \left.\partial \mbox{\textbf{f}} / \partial \bm{x} \right|_{\bm{x}^{[k]},\bm{u}^{[k]}}$, $\bm{G}^{[k]} = \left.\partial \mbox{\textbf{g}}/\partial \bm{x}\right|_{\bm{x}^{[k]}}$.

EKF is analogue to KF when the dynamic state transition and measurement matrices, $\bm{F}^{[k]}$ and $\bm{G}^{[k]}$, replace their static counterparts, $\bm{F}$ and $\bm{G}$
from Section~\ref{subsection:sensor_fusion}.
The effectiveness of EKF has been demonstrated over the years,
although estimation errors and potential convergence issues may arise from considering only first-order terms in the Jacobian linearization.

\paragraph{Unscented Kalman Filter}\label{paragraph:UKF_preliminaries}

Recently, the Unscented Kalman Filter (UKF) \cite{Julier1997} emerged as an alternative to EKF, mitigating its limitations without adding extra computational cost. UKF utilizes the actual non-linear models, approximating the distribution of the random variable representing the state. Unlike EKF, which propagates only the state mean and covariance, UKF propagates a set of "sigma" points from the original distribution through the non-linear model.
Each UKF iteration is composed of three stages, namely \textit{prediction}, \textit{measurement propagation} and \textit{correction}. 

\noindent The \textit{prediction} phase is composed of the following steps:
$\bm{\mathcal{X}}^{[k-1]} = \begin{bmatrix}
     \bm{\hat{x}}^{[k-1]} & \bm{\hat{x}}^{[k-1]} \pm \eta\sqrt{\bm{\hat{P}}^{[k-1]}}
\end{bmatrix}$,
$\bm{\mathcal{X}}^{[k]}_{-} = \mbox{\textbf{f}}(\bm{\mathcal{X}}^{[k-1]},\bm{u}^{[k-1]})$,
$\bm{\hat{x}}^{[k]}_{-} = \sum_{i=0}^{2n} w_i^{(m)}\bm{\mathcal{X}}^{[k]}_{i,-}$,
$\bm{\hat{P}}^{[k]}_{-} = \sum_{i=0}^{2n} w_i^{(c)} (\bm{\mathcal{X}}^{[k]}_{i,-} - \bm{\hat{x}}^{[k]}_{-})(\bm{\mathcal{X}}^{[k]}_{i,-} - \bm{\hat{x}}^{[k]}_{-})^{\top} + \bm{Q}$,
where $\eta = \sqrt{n+\lambda}$, and $\lambda = n(\alpha^2-1)$, are scaling parameters ($\alpha$ sets the spread of the "sigma" points around the mean). The state mean and covariance are reconstructed through the scaled unscented transform (SUT), using a weighting approach over the prior "sigma" points, with weights $w_0^{(m)} = \frac{\lambda}{n+\lambda}, w_0^{(c)} = \frac{\lambda}{n+\lambda} + (1 - \alpha^2 + \beta^2)$ and $w_i^{(m)} = w_i^{(c)} = \frac{1}{2(n+\lambda)},\: \forall i=1,2,\ldots,2n$, with $(m)$ and $(c)$ referring to the mean and  covariance, respectively ($\beta$ enables the integration of prior knowledge about the actual distribution of the state).   

The \textit{measurement propagation} step is defined by
$\bm{\mathcal{X}}^{[k]}_- = \begin{bmatrix}
    \bm{\hat{x}}^{[k]}_- &  \bm{\hat{x}}^{[k]}_- \pm \eta\sqrt{\bm{\hat{P}}^{[k]}}_-
\end{bmatrix}$,
$\bm{\mathcal{Y}}^{[k]}_{-} = \mbox{\textbf{g}}(\bm{\mathcal{X}}^{[k]}_{-})$,
$\bm{\hat{y}}^{[k]}_{-} = \sum_{i=0}^{2n} w_i^{(m)}\bm{\mathcal{Y}}^{[k]}_{i,-}$,
where the SUT is again used, retrieve the prior measurements vector from the prior state vector
to computing the measurement $\bm{\hat{P}}^{[k]}_{yy}$ and cross covariances $\bm{\hat{P}}^{[k]}_{xy}$
such that
$\bm{\hat{P}}^{[k]}_{yy} = \sum_{i=0}^{2n} w_i^{(c)}(\bm{\mathcal{Y}}^{[k]}_{i,-} - \bm{\hat{y}}^{[k]}_{-})(\bm{\mathcal{Y}}^{[k]}_{i,-} - \bm{\hat{y}}^{[k]}_{-})^{\top} + \bm{R}$,
$\bm{\hat{P}}^{[k]}_{xy} = \sum_{i=0}^{2n} w_i^{(c)}(\bm{\mathcal{X}}^{[k]}_{i,-} - \bm{\hat{x}}^{[k]}_{-})(\bm{\mathcal{Y}}^{[k]}_{i,-} - \bm{\hat{y}}^{[k]}_{-})^{\top}$.


\noindent Finally, the \textit{correction} step is performed equivalently to the \textit{measurement update} stage of the linear KF:
$\bm{K}^{[k]} = \bm{\hat{P}}^{[k]}_{xy}(\bm{\hat{P}}^{[k]}_{yy})^{-1}$,
$\bm{\hat{x}}^{[k]} = \bm{\hat{x}}^{[k]}_{-} + \bm{K}^{[k]}(\bm{y}^{[k]} - \bm{\hat{y}}^{[k]}_{-})$,
$\bm{\hat{P}}^{[k]} = \bm{\hat{P}}^{[k]}_{-} - \bm{K}^{[k]}\bm{\hat{P}}^{[k]}_{yy}(\bm{K}^{[k]})^{\top}$.

\begin{remark}\label{remark:UKF_preliminaries}
    Unlike the EKF, which relies only on first-order terms (Taylor series expansion), the UKF provides accurate third order approximations for Gaussian distributions and at least the second order for non-Gaussian ones \cite{VanDerMerwe2004}. The precision of higher-order moments in the UKF depends on the choice of the scaling parameters $\alpha$ and $\beta$. \eor
\end{remark}

\section{Methodology}\label{section:methodology}

Sensor fusion methods, such as the KF strategies described in Section \ref{subsection:sensor_fusion}, are a solution for integrating measurements in WDNs. They can be coupled with interpolation strategies like GSI or AW-GSI to improve estimation accuracy: the state vector reconstructed by any of these interpolation methods is used as the initial guess for the KF algorithm, which iterates between the state prediction and data assimilation steps until the stop condition is met.

\subsection{Improving state estimation through KF}\label{subsection:improving_state_KF}

An initial integration of GSI/AW-GSI with Kalman Filter based methods can be attained by considering the linear case. In this scenario, pressure measurements continue to be the only source of information for the network hydraulic state. 

\subsubsection{Prediction step}

The state prediction is defined as in Section~\ref{subsection:sensor_fusion}.
In this case, the estimated state is represented by the estimated hydraulic head vector $\bm{\hat{h}}^{[k]}$ (at the $k$-th step), with no input $\bm{u}^{[k]}$, and where $\bm{F}$ is defined as follows: 

\begin{equation}\label{eq:KF(AW)GSI_prediction}
    \bm{\hat{h}}^{[k]}_- = \bm{F}\bm{\hat{h}}^{[k-1]} = \left(\epsilon \bm{I}_n + (1-\epsilon)\bm{\Phi}^{-1}\bm{\Omega})\right)\bm{\hat{h}}^{[k-1]},
\end{equation}

\noindent where $\bm{\Omega}$ and $\bm{\Phi}$ are a weighted adjacency matrix and its degree matrix (derived from GSI/AW-GSI), and $0\leq\epsilon\leq1$ is a weight measuring the relevance of the identity matrix, which aims to keep the previous state after prediction, and the diffusion matrix $\bm{\Psi}=\bm{\Phi}^{-1}\bm{\Omega}$, which seeks to diffuse the previous state considering the relationship among neighboring nodes. 

In KF based methods, the estimation from this linear process model (prior) is provided to a measurement correction process to retrieve the actual (posterior) estimation, in order to avoid a potential degradation in performance. The relevance of the measurement correction step 
can be shown by analyzing the evolution of the state through the iterations if a process model such as \eqref{eq:KF(AW)GSI_prediction} is used without the measurement update phase. In this scenario, the state estimation at $[k]$ can be easily posed with respect to the initial estimation as $\bm{\hat{h}}^{[k]} = \bm{F}^k\bm{\hat{h}}^{[0]}$.

\begin{prop}\label{proposition:1}
    The $k$-$th$ power of $\bm{F}$ in \eqref{eq:KF(AW)GSI_prediction}, i.e., $\bm{F}^k$, is power convergent with
    $\lim_{k\to\infty} \bm{F}^k = \bm{\tilde{F}}$,
    \noindent where $\tilde{f}_{ij} > 0, \; \forall i,j = 1,2,..., n$. Moreover
    $\bm{\tilde{F}}\bm{x} = \delta\bm{v}_1$,
    \noindent where $\bm{x}$ is any random vector, $\delta$ is a scalar dependent on $\bm{x}$, and $\bm{v}_1$ is the eigenvector associated to the largest eigenvalue of $\bm{F}$, i.e., $\lambda_1 = 1$, with all the elements in $\bm{v}_1$ being equal.
\end{prop}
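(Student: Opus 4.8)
The plan is to recognize $\bm{F}$ as a primitive row-stochastic matrix and then invoke the Perron--Frobenius theorem. First I would observe that, by construction, $\bm{\Phi}$ is the degree matrix of the weighted adjacency matrix $\bm{\Omega}$, i.e.\ $\phi_{ii}=\sum_j \omega_{ij}$, so the diffusion matrix $\bm{\Psi}=\bm{\Phi}^{-1}\bm{\Omega}$ has non-negative entries with each row summing to one; consequently $\bm{F}=\epsilon\bm{I}_n+(1-\epsilon)\bm{\Psi}$, being for $0<\epsilon<1$ a convex combination of two row-stochastic non-negative matrices, is itself row-stochastic and non-negative, with strictly positive diagonal entries $f_{ii}\geq\epsilon>0$. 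Since $\mathcal{G}$ is connected, the weighted graph underlying $\bm{\Omega}$ (and hence $\bm{F}$) is irreducible, and an irreducible non-negative matrix with a positive diagonal is primitive. This is the step I expect to require the most care: one must make explicit the hypotheses on $\epsilon$ (strictly between $0$ and $1$, so that the self-loops of the $\epsilon\bm{I}_n$ term are present but the matrix is not the identity) and on the connectivity of the weighted graph, and argue aperiodicity from those self-loops; the boundary cases $\epsilon\in\{0,1\}$ must be excluded, as $\bm{F}=\bm{I}_n$ does not converge to a rank-one limit and $\bm{F}=\bm{\Psi}$ may be periodic for bipartite structures.

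With primitivity established, the Perron--Frobenius theorem for stochastic matrices delivers the rest almost immediately: the spectral radius of $\bm{F}$ equals $1$, the eigenvalue $\lambda_1=1$ is simple, every other eigenvalue satisfies $\abs{\lambda_i}<1$, and therefore $\bm{F}$ is power-convergent with $\lim_{k\to\infty}\bm{F}^k=\bm{\tilde{F}}=\bm{1}_n\bm{\pi}^{\top}$, where $\bm{\pi}>0$ is the unique strictly positive stationary left eigenvector normalized by $\bm{\pi}^{\top}\bm{1}_n=1$. The strict positivity of $\bm{\pi}$ gives $\tilde{f}_{ij}=\pi_j>0$ for all $i,j$, which is the first assertion. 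If an elementary derivation is preferred over citing Perron--Frobenius, the same conclusion follows from the Jordan decomposition of $\bm{F}$: the one-dimensional $\lambda_1=1$ block survives while all remaining Jordan blocks vanish in the limit because their eigenvalues lie strictly inside the unit disc, leaving a rank-one limit whose range is the Perron eigenvector.

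For the second assertion I would simply compute, for any vector $\bm{x}$, that $\bm{\tilde{F}}\bm{x}=\bm{1}_n\bm{\pi}^{\top}\bm{x}=(\bm{\pi}^{\top}\bm{x})\,\bm{1}_n$, so the claim holds with $\delta=\bm{\pi}^{\top}\bm{x}$, a scalar depending on $\bm{x}$, and $\bm{v}_1=\bm{1}_n$. It then remains to confirm that $\bm{1}_n$ is indeed the right eigenvector of $\bm{F}$ associated with $\lambda_1=1$: this is immediate from row-stochasticity, since $\bm{F}\bm{1}_n=\epsilon\bm{1}_n+(1-\epsilon)\bm{\Psi}\bm{1}_n=\bm{1}_n$, and all its entries are equal by inspection, matching the statement. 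This closes the proof.
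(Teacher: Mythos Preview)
Your proposal is correct and follows essentially the same strategy as the paper: verify that $\bm{F}$ is a primitive row-stochastic matrix and then invoke Perron--Frobenius to obtain convergence of $\bm{F}^k$ to a rank-one limit with range spanned by $\bm{1}_n$.

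The differences are in execution rather than approach, and yours is the tidier version in a couple of places. For primitivity, the paper argues directly that some power $\bm{F}^{k^+}$ is entrywise positive via a longest-path argument on the connected graph; you instead use the cleaner standard fact that an irreducible non-negative matrix with strictly positive diagonal is aperiodic, hence primitive, leveraging the $\epsilon\bm{I}_n$ term. You are also more careful about the boundary cases: you correctly exclude $\epsilon\in\{0,1\}$, whereas the paper states $0\le\epsilon\le 1$ without noting that $\epsilon=1$ gives $\bm{F}=\bm{I}_n$ (no rank-one limit) and $\epsilon=0$ may yield a periodic $\bm{\Psi}$. Finally, for the limit itself the paper writes $\bm{F}=\bm{V}\bm{\Lambda}\bm{V}^{-1}$ with $\bm{\Lambda}$ diagonal, tacitly assuming diagonalizability; your remark that the Jordan blocks for the subdominant eigenvalues vanish covers the general case and leads directly to the rank-one form $\bm{\tilde{F}}=\bm{1}_n\bm{\pi}^{\top}$, from which $\tilde f_{ij}=\pi_j>0$ and $\delta=\bm{\pi}^{\top}\bm{x}$ are read off immediately. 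The paper's $\delta=\bm{v}^{inv}_{(1,:)}\bm{x}$ is the same object, since the first row of $\bm{V}^{-1}$ is precisely the left Perron eigenvector up to normalization.
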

\begin{proof}
    The effect of applying a linear process model, defined by a matrix $\bm{F}$, to a state vector depends on the eigenvalues of this matrix. The Perron-Frobenius theorem \cite{Pillai2005} states that if $\bm{F}$ is a stochastic\footnote{Note that, in this context, "stochastic" refers to square matrices with non-negative elements whose rows sum up to 1.}, irreducible and primitive matrix, then $\lambda_1 = 1$ is an eigenvalue of $\bm{F}$ with algebraic (and geometric) multiplicity 1, associated to a positive eigenvector $\bm{v}_1 > 0$, and the rest of  eigenvalues of $\bm{F}$ satisfy that $|\lambda_i|< 1,\; \forall i=2,...,n$. 
    
    First, for $\bm{F}$ to be stochastic, the conditions $f_{ij}\geq 0, \; \forall i,j=1,2,...,n$ and $\sum_{j=1}^{n} f_{ij} = 1$ must hold. About the first condition, note that $0\leq\epsilon\leq 1$ and $\psi_{ij}\geq 0, \; \forall i,j=1,2,...,n$ if $\omega_{ij}\geq 0, \; \forall i,j=1,2,...,n$, considering that $\phi_{ii} = \sum_{j=1}^{n} \omega_{ij}$ (and $\phi_{ij}=0$ if $i\neq j$). Therefore, $f_{ij}\geq 0, \; \forall i,j=1,2,...,n$ is fulfilled. Moreover, about the second condition, note that it is fulfilled by $\bm{\Psi}$ because it is a diffusion matrix, i.e., $\sum_{j=1}^{n} \psi_{ij} = \sum_{j=1}^{n} \frac{1}{\phi_{ii}}\omega_{ij} = \frac{1}{b\phi_{ii}} \sum_{j=1}^{n} \omega_{ij} = \frac{1}{\phi_{ii}}\phi_{ii} = 1$. Due to the definition of $\bm{F}$ in \eqref{eq:KF(AW)GSI_prediction} and the range of $\epsilon$, we have that $\sum_{j=1}^{n} f_{ij} = \epsilon + (1-\epsilon)\sum_{j=1}^{n} \psi_{ij} = 1$. 

    Then, for $\bm{F}$ to be irreducible, the directed graph of $\bm{F}$, denoted as $\mathcal{G}^F = (\mathcal{V}^F,\mathcal{E}^F)$, must be strongly connected. A graph is strongly connected if there exists a directed path between any two distinct vertices. By definition, $\bm{F}$ has positive non-zero values in both $f_{ij}$ and $f_{ji}$ if $\mathscr{v}_i$ and $\mathscr{v}_j$ are connected, because this also holds for $\bm{\Psi}$ (additionally, the identity matrix in \eqref{eq:KF(AW)GSI_prediction} can be considered to represent self-loops in $\mathcal{G}^F$, although they are not relevant to check the irreducibility of $\bm{F}$). This translates into the existence of a directed edge in $\mathcal{G}^F$ from $\mathscr{v}_i^F$ to $\mathscr{v}_j^F$, and another directed edge from $\mathscr{v}_j^F$ to $\mathscr{v}_i^F$. As $\bm{\Psi}$ is the weighted adjacency matrix of a connected graph $\mathcal{G}$, a directed path between any two distinct vertices in $\mathcal{G}^F$ must exist, and thus $\mathcal{G}^F$ must be strongly connected and $\bm{F}$ is irreducible.

    Additionally, for $\bm{F}$ to be primitive, there must exist a positive integer $k^+$ such that all entries of $\bm{F}^{k^+}$ are positive. $\bm{F}$ is a stochastic matrix, and therefore $f_{ij}$ can be regarded as the probability of transitioning from $\mathscr{v}_i$ to $\mathscr{v}_j$, with the transition being defined as a multiplication by $\bm{F}$. $\mathcal{G}$ is connected, so there must exist a path between each possible pair of nodes. Thus, with sufficient transitions, it must be possible to arrive from any node to any other node. If the longest possible path in the graph requires $k^+$ transitions (steps), then $\bm{F}^{k^+}>0$.

    Considering that $\bm{F}$ is shown to verify the requirements of the Perron-Frobenius theorem, the conditions about the eigenvalues of $\bm{F}$ by this theorem apply. Regarding that $\bm{F} = \bm{V}\bm{\Lambda}\bm{V}^{-1}$, it is well-known that $\bm{F}^k = \bm{V}\bm{\Lambda}^k\bm{V}^{-1}$.
    If $\bm{\Lambda}= \mbox{diag}(\lambda_1,\lambda_2,\ldots,\lambda_n)$, then $\bm{\Lambda}^k = \mbox{diag}(\lambda_1^k,\lambda_2^k,\ldots,\lambda_n^k)$. As the Perron-Frobenius theorem guarantees that $\lambda_1 = 1$ and $|\lambda_i|<1, \; \forall i=2,\ldots,n$, when $k\rightarrow \infty$, we have that $\lambda_1^k = 1$ and $\lambda_i^k \rightarrow 0, \; \forall i=2,\ldots,n$, leading to $\lim_{k\to\infty} \bm{\Lambda}^k = \mbox{diag}(1,0,\ldots,0)$,  and therefore the $k$-$th$ power of $\bm{F}$ can be defined as $\lim_{k\to\infty} \bm{F}^k = \lim_{k\to\infty} \bm{V}\bm{\Lambda}^k\bm{V}^{-1}= \bm{V}\mbox{diag}(1,0,\ldots,0)\bm{V}^{-1} =  \bm{v}_1\bm{v}^{inv}_{(1,:)}$  
    %
    %
    %
    where $\bm{v}^{inv}_{(1,:)}$ is the first row of $\bm{V}^{-1}$. Considering that $\bm{F}$ is a stochastic matrix, $\bm{F}\bm{1}_n = \bm{1}_n$. This implies that $\lambda_1 = 1$ is an eigenvalue associated to an eigenvector $v_1$ that is proportional to $\bm{1}_n$, and because $\bm{F}$ fulfills the Perron-Frobenius theorem, it is guaranteed that $\lambda_1$ is the largest eigenvalue. Finally, considering that approximation $\bm{F}^k\bm{x} \approx \bm{v}_1\bm{v}^{inv}_{(1,:)}\bm{x} = \delta\bm{v}_1$
    %
    %
    where $\delta = \bm{v}^{inv}_{(1,:)}\bm{x}$ asymptotically becomes an equality when $k\rightarrow \infty$, concludes the proof.
\end{proof}

Therefore, the steady-state behaviour of the linear process model leads to a constant vector, which would not capture the differences in hydraulic head among the network junctions. 

\subsubsection{Measurement update step}

The measurement update assimilates sensor readings. In the linear case, only pressure measurements can be considered, because the relationship between heads and flows or demands is non-linear. Thus, the measurement function can be posed simply as
\begin{equation}\label{eq:KF(AW)GSI_measurement}
    \bm{y}^{[k]}  = \bm{S}\bm{h}^{[k]},
\end{equation}
where the sensorization matrix $\bm{S}$ associates the measured heads with the node states corresponding to the installed sensors. 
An analysis about the evolution of the state estimation can be again performed in order to remark the importance of adding the measurement update step. The data assimilation steps of the Kalman Filter, posed using the system defined by \eqref{eq:KF(AW)GSI_prediction} and \eqref{eq:KF(AW)GSI_measurement}, can be expressed as
$\bm{\hat{h}}^{[k]} = \bm{\hat{h}}^{[k]}_{-} + \bm{K}_h^{[k]}(\bm{h}_s - \bm{S}\bm{\hat{h}}^{[k]}_{-}) = \bm{F}\bm{\hat{h}}^{[k-1]} + \bm{K}_h^{[k]}(\bm{h}_s - \bm{S}\bm{F}\bm{\hat{h}}^{[k-1]})$
thus
\begin{equation}\label{eq:KF(AW)GSI_pred+meas}
    \bm{\hat{h}}^{[k]} = 
    \left(\bm{I}_n - \bm{K}_h^{[k]}\bm{S}\right)\bm{F}\bm{\hat{h}}^{[k-1]} + \bm{K}_h^{[k]}\bm{h}_s, 
\end{equation}
\noindent where $\bm{K}_h^{[k]}$ is the Kalman gain and $\bm{h}_s$ is the vector of actual head measurements, which is constant during the estimator operation because our method operates over individual samples. Since measurements are extracted from the actual state $\bm{h}$ through $\bm{S}$, \eqref{eq:KF(AW)GSI_pred+meas} can be reformulated as
$\bm{\hat{h}}^{[k]} = \left(\bm{I}_n - \bm{K}_h^{[k]}\bm{S}\right)\bm{F} \bm{\hat{h}}^{[k-1]} + \bm{K}_h^{[k]}\bm{S}\bm{h}$.
If the estimation error at the next iteration is defined as
$\bm{e}^{[k]} = \bm{h} - \bm{\hat{h}}^{[k]} = \bm{h} - (\bm{I}_n - \bm{K}_h^{[k]}\bm{S})\bm{F} \bm{\hat{h}}^{[k-1]} - \bm{K}_h^{[k]}\bm{S}\bm{h} = 
    (\bm{I}_n - \bm{K}_h^{[k]}\bm{S})\bm{h} - (\bm{I}_n - \bm{K}_h^{[k]}\bm{S})\bm{F} \bm{\hat{h}}^{[k-1]} = 
    (\bm{I}_n - \bm{K}_h^{[k]}\bm{S})(\bm{h} - \bm{F} \bm{\hat{h}}^{[k-1]}) =
    (\bm{I}_n - \bm{K}_h^{[k]}\bm{S})(\bm{h} - \bm{\hat{h}}^{[k]}_-)$,
then
\begin{equation}\label{eq:KF(AW)GSI_pred+meas_3}
    \bm{e}^{[k]} = \left(\bm{I}_n - \bm{K}_h^{[k]}\bm{S}\right)\bm{e}^{[k]}_-.
\end{equation}
The expression in \eqref{eq:KF(AW)GSI_pred+meas_3} defines the evolution of the estimation error after the prediction step to the estimation error after the measurement update step. An analysis of the system matrix can clarify the effect of the data assimilation step.
\begin{prop}\label{proposition:2}
    With the notation from \eqref{eq:KF(AW)GSI_pred+meas_3}, the following equivalence holds
    $\bm{I}_n - \bm{K}_h^{[k]}\bm{S} = \left(\bm{I}_n + \bm{P}^{[k]}_{h,-}\bm{S}^{\top}\bm{R}_h^{-1}\bm{S}\right)^{-1}$,
    \noindent where $\bm{R}_h$ is the measurement noise covariance matrix for the the data assimilation in \eqref{eq:KF(AW)GSI_measurement}.
\end{prop}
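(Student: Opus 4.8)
The plan is to recognize the claimed identity as a standard Kalman-filter relation between the covariance-form and information-form measurement updates, and to establish it directly from the definition of the Kalman gain. Recalling the measurement-update equations of Section~\ref{subsubsection:KF} with $\bm{G}=\bm{S}$, $\bm{\hat{P}}^{[k]}_{-}=\bm{P}^{[k]}_{h,-}$ and $\bm{R}=\bm{R}_h$, the gain reads $\bm{K}_h^{[k]}=\bm{P}^{[k]}_{h,-}\bm{S}^{\top}\big(\bm{S}\bm{P}^{[k]}_{h,-}\bm{S}^{\top}+\bm{R}_h\big)^{-1}$. First I would observe that $\bm{R}_h$ (a measurement-noise covariance, assumed positive definite) and $\bm{S}\bm{P}^{[k]}_{h,-}\bm{S}^{\top}+\bm{R}_h$ are invertible, and that $\bm{I}_n+\bm{P}^{[k]}_{h,-}\bm{S}^{\top}\bm{R}_h^{-1}\bm{S}$ is invertible as well since $\bm{P}^{[k]}_{h,-}\succeq\bm{0}$ forces its spectrum into the right half-plane; hence every inverse appearing in the statement is well defined.

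The core step is the algebraic identity itself. I would prove it by showing that the product $\big(\bm{I}_n-\bm{K}_h^{[k]}\bm{S}\big)\big(\bm{I}_n+\bm{P}^{[k]}_{h,-}\bm{S}^{\top}\bm{R}_h^{-1}\bm{S}\big)$ equals $\bm{I}_n$. Expanding, this reduces to verifying $\bm{K}_h^{[k]}\bm{S}\big(\bm{I}_n+\bm{P}^{[k]}_{h,-}\bm{S}^{\top}\bm{R}_h^{-1}\bm{S}\big)=\bm{P}^{[k]}_{h,-}\bm{S}^{\top}\bm{R}_h^{-1}\bm{S}$. Substituting the gain and using the ``push-through'' rearrangement $\bm{S}+\bm{S}\bm{P}^{[k]}_{h,-}\bm{S}^{\top}\bm{R}_h^{-1}\bm{S}=\big(\bm{R}_h+\bm{S}\bm{P}^{[k]}_{h,-}\bm{S}^{\top}\big)\bm{R}_h^{-1}\bm{S}$, the factor $\big(\bm{S}\bm{P}^{[k]}_{h,-}\bm{S}^{\top}+\bm{R}_h\big)^{-1}$ cancels against $\bm{R}_h+\bm{S}\bm{P}^{[k]}_{h,-}\bm{S}^{\top}$, leaving exactly $\bm{P}^{[k]}_{h,-}\bm{S}^{\top}\bm{R}_h^{-1}\bm{S}$, as required. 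Equivalently, one may quote the Woodbury matrix identity with $\bm{A}=\bm{I}_n$, $\bm{U}=\bm{P}^{[k]}_{h,-}\bm{S}^{\top}$, $\bm{C}=\bm{R}_h^{-1}$, $\bm{V}=\bm{S}$, which yields $\big(\bm{I}_n+\bm{P}^{[k]}_{h,-}\bm{S}^{\top}\bm{R}_h^{-1}\bm{S}\big)^{-1}=\bm{I}_n-\bm{P}^{[k]}_{h,-}\bm{S}^{\top}\big(\bm{R}_h+\bm{S}\bm{P}^{[k]}_{h,-}\bm{S}^{\top}\big)^{-1}\bm{S}=\bm{I}_n-\bm{K}_h^{[k]}\bm{S}$ in a single line.

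I do not anticipate a genuine obstacle: the result is a classical Kalman-filter identity, and the only points requiring care are (i) matching the generic KF notation of Section~\ref{subsubsection:KF} to the head-estimation quantities $\bm{P}^{[k]}_{h,-}$, $\bm{S}$ and $\bm{R}_h$, and (ii) justifying invertibility of the matrices involved so that the statement is meaningful. If one prefers not to invoke Woodbury as a black box, the direct product computation sketched above is fully self-contained, relying only on the definition of $\bm{K}_h^{[k]}$ and elementary factorizations.
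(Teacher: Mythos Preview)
Your proposal is correct and, like the paper, hinges on the Woodbury identity; the difference lies in how Woodbury is instantiated. The paper first multiplies the gain relation on the right by $\bm{\hat{P}}^{[k]}_{h,-}$ to recognize the right-hand side as a Woodbury expansion with $\bm{A}=(\bm{\hat{P}}^{[k]}_{h,-})^{-1}$, $\bm{U}=\bm{S}^{\top}$, $\bm{C}=\bm{R}_h^{-1}$, $\bm{V}=\bm{S}$, obtaining the information-form posterior covariance $[(\bm{\hat{P}}^{[k]}_{h,-})^{-1}+\bm{S}^{\top}\bm{R}_h^{-1}\bm{S}]^{-1}$, and only then factors out $\bm{\hat{P}}^{[k]}_{h,-}$ to reach the claim. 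Your route applies Woodbury directly to $\bm{I}_n+\bm{P}^{[k]}_{h,-}\bm{S}^{\top}\bm{R}_h^{-1}\bm{S}$ with $\bm{A}=\bm{I}_n$, which is shorter and, notably, does not require $\bm{P}^{[k]}_{h,-}$ itself to be invertible (only positive semidefinite, as you state), whereas the paper's choice of $\bm{A}=(\bm{\hat{P}}^{[k]}_{h,-})^{-1}$ tacitly assumes strict positive definiteness of the prior covariance. Your alternative direct verification via the push-through factorization is also sound and fully self-contained; either variant would serve as a valid replacement for the paper's argument.
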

\begin{proof}
    The gain of the linear Kalman Filter, considering the measurement function in \eqref{eq:KF(AW)GSI_measurement}, may be written as:
    $\bm{K}_h^{[k]} = \bm{\hat{P}}^{[k]}_{h,-}\bm{S}^{\top}\left(\bm{S}\bm{\hat{P}}^{[k]}_{h,-}\bm{S}^{\top} + \bm{R}_h\right)^{-1}$.
    Manipulating in both sides yields:
    $\bm{\hat{P}}^{[k]}_{h,-} - \bm{K}_h^{[k]}\bm{S}\bm{\hat{P}}^{[k]}_{h,-} = \bm{\hat{P}}^{[k]}_{h,-} - \bm{\hat{P}}^{[k]}_{h,-}\bm{S}^{\top}(\bm{S}\bm{\hat{P}}^{[k]}_{h,-}\bm{S}^{\top} + \bm{R}_h)^{-1}\bm{S}\bm{\hat{P}}^{[k]}_{h,-}$.
    Recall the Woodbury matrix identity, which states that
    $(\bm{A} + \bm{U}\bm{C}\bm{V})^{-1} = \bm{A}^{-1} - \bm{A}^{-1}\bm{U}(\bm{V}\bm{A}^{-1}\bm{U} + \bm{C}^{-1})^{-1}\bm{V}\bm{A}^{-1}$,
    where $\bm{A}, \bm{U}, \bm{V}$ and $\bm{C}$ are conformable matrices. Adapting the last equality,
    if $\bm{A} = (\bm{\hat{P}}^{[k]}_{h,-})^{-1}$, $\bm{U} = \bm{S}^{\top}$, $\bm{V} = \bm{S}$ and $\bm{C} = \bm{R}_h^{-1}$, it follows that
    $\bm{\hat{P}}^{[k]}_{h,-} - \bm{K}_h^{[k]}\bm{S}\bm{\hat{P}}^{[k]}_{h,-} = [(\bm{\hat{P}}^{[k]}_{h,-})^{-1} + \bm{S}^{\top}\bm{R}_h^{-1}\bm{S}]^{-1}$.

    Extracting $\bm{\hat{P}}^{[k]}_{h,-}$ in the left-hand side, multiplying $\left(\bm{\hat{P}}^{[k]}_{h,-}\right)^{-1}$ in both sides and recalling the properties of the multiplication of two inverse matrices, we get
    $\bm{I}_n - \bm{K}_h^{[k]}\bm{S}
        = ((\bm{\hat{P}}^{[k]}_{h,-})^{-1} + \bm{S}^{\top}\bm{R}_h^{-1}\bm{S})^{-1}(\bm{\hat{P}}^{[k]}_{h,-})^{-1}
        =(\bm{I}_n + \bm{\hat{P}}^{[k]}_{h,-}\bm{S}^{\top}\bm{R}_h^{-1}\bm{S})^{-1}$
    thus concluding the proof.
\end{proof}

Using in \eqref{eq:KF(AW)GSI_pred+meas_3} the equivalence in Proposition \ref{proposition:2} gives:
\begin{equation}\label{eq:KF(AW)GSI_pred+meas_4}
    \bm{e}^{[k]} = \left(\bm{I}_n + \bm{\hat{P}}^{[k]}_{h,-}\bm{S}^{\top}\bm{R}_h^{-1}\bm{S}\right)^{-1}\bm{e}^{[k]}_-.
\end{equation}
Next we explore the update from the posterior error at $[k-1]$, i.e., $\bm{e}^{[k-1]}$, to $\bm{e}^{[k]}_-$ by first recalling their definitions:
$\bm{e}^{[k]}_- = \bm{h} - \bm{\hat{h}}^{[k]}_- = \bm{h} - \bm{F}\bm{\hat{h}}^{[k-1]}$
and
$\bm{e}^{[k-1]} = \bm{h} - \bm{\hat{h}}^{[k-1]}$.
The former comes from the definition of the prediction step, and the later is analogue to the definition in \eqref{eq:KF(AW)GSI_pred+meas_3}. Rewriting the later as $ \bm{\hat{h}}^{[k-1]}= \bm{h} - \bm{e}^{[k-1]}$ and inserting it into the former leads to
    $\bm{e}^{[k]}_- = \bm{h} - \bm{F}\left(\bm{h} - \bm{e}^{[k-1]}\right) = \bm{F}\bm{e}^{[k-1]} + \left(\bm{I}_n - \bm{F}\right)\bm{h}$.
Introducing this 
in \eqref{eq:KF(AW)GSI_pred+meas_4} leads to the update equation for the a posterior error:
\begin{equation}\label{eq:KF(AW)GSI_pred+meas_7}
    \bm{e}^{[k]} =  \bm{M}^{[k]}_s\bm{F}\bm{e}^{[k-1]} + \bm{M}^{[k]}_s\left(\bm{I}_n - \bm{F}\right)\bm{h},
\end{equation}
with shorthand $\bm{M}^{[k]}_s = \left(\bm{I}_n + \bm{\hat{P}}^{[k]}_{h,-}\bm{S}^{\top}\bm{R}_h^{-1}\bm{S}\right)^{-1}$.
\begin{lemma}\label{th:PD-lemma}
Let $P$ be a semi-positive definite matrix and $D$ a diagonal matrix with non-negative entries $d_{ii} \ge 0$,
then matrix $PD$ has non-negative eigenvalues.
\end{lemma}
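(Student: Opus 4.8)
The plan is to reduce $PD$ to a symmetric matrix by a congruence-type manipulation and then invoke the classical fact that, for two square matrices of the same size, $XY$ and $YX$ have the same characteristic polynomial. Recall that $P$ is symmetric (being positive semi-definite in the usual sense).

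First I would set $D^{1/2} = \mathrm{diag}(\sqrt{d_{11}},\dots,\sqrt{d_{nn}})$, which is well defined precisely because $d_{ii} \ge 0$, and write $PD = (PD^{1/2})(D^{1/2})$. With $X := PD^{1/2}$ and $Y := D^{1/2}$ we have $PD = XY$ and $YX = D^{1/2} P D^{1/2}$, so $PD$ has exactly the same eigenvalues (with multiplicities) as $A := D^{1/2} P D^{1/2}$. Now $A^\top = D^{1/2} P^\top D^{1/2} = A$ since $P$ is symmetric and $D^{1/2}$ is diagonal, and $x^\top A x = (D^{1/2} x)^\top P (D^{1/2} x) \ge 0$ for every $x$ by the PSD assumption on $P$; hence $A$ is symmetric PSD and its eigenvalues are real and non-negative. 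Transferring this back through the identical characteristic polynomials gives that the eigenvalues of $PD$ are non-negative, which is the claim.

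The only delicate point — and the main obstacle — is that $D$ may be singular, so $D^{1/2}$ is not invertible and $PD$ is not literally similar to $A$; this is exactly where the $XY$/$YX$ characteristic-polynomial identity does the work, and that identity is itself proved by a continuity argument (perturbing $X$ to $X + \epsilon I_n$). If one prefers to avoid citing it, an equivalent route is to perturb $D$ directly: for $\epsilon > 0$ the matrix $D + \epsilon I_n \succ 0$, so $D_\epsilon^{1/2} := (D + \epsilon I_n)^{1/2}$ is invertible and $P(D + \epsilon I_n) = D_\epsilon^{-1/2}\,\big(D_\epsilon^{1/2} P D_\epsilon^{1/2}\big)\, D_\epsilon^{1/2}$ is genuinely similar to the symmetric PSD matrix $D_\epsilon^{1/2} P D_\epsilon^{1/2}$; hence $P(D + \epsilon I_n)$ has real non-negative eigenvalues, and letting $\epsilon \to 0^+$ and using continuity of the spectrum (the set of reals $\ge 0$ being closed) yields the result for $PD$. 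Either way, the substantive content is the nonsingular case handled by the $D^{1/2}$ congruence, with a limiting/commuting-product argument covering the degenerate case.
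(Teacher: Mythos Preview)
Your proof is correct and takes a genuinely different route from the paper. The paper argues by contradiction: assuming $PDv = \lambda v$ with $\lambda < 0$, it left-multiplies by $v^\top D^\top$ to obtain $(Dv)^\top P (Dv) = \lambda \sum_i d_{ii} v_i^2$, then notes that the left-hand side is non-negative (by positive semi-definiteness of $P$) while the right-hand factor $\sum_i d_{ii} v_i^2$ is non-negative, so $\lambda$ cannot be negative. This is shorter and avoids any similarity, square-root, or continuity machinery, but as written it only rules out negative \emph{real} eigenvalues and does not explicitly establish that the spectrum of $PD$ is real in the first place. Your approach via the congruence $D^{1/2} P D^{1/2}$ is more structural: it identifies $PD$ as isospectral with a symmetric PSD matrix, which delivers realness and non-negativity in one stroke, at the cost of invoking the $XY$/$YX$ characteristic-polynomial identity (or an equivalent perturbation) to cover the case of singular $D$. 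Both arguments are valid; the paper's is more elementary, while yours gives strictly more information about the spectral structure and is arguably what is actually needed downstream, since Theorem~1 explicitly uses that the eigenvalues are real.
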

\begin{proof}
We prove the result by contradiction.
Assume there exists $\lambda < 0$, the eigenvalue of $PD$
and its associated eigenvector $v$. Then
$PD v = \lambda v$
iff
$v^\top D^\top P D v = \lambda (v^\top D^\top v)$
iff
$\underbrace{(Dv)^\top P (Dv)}_{\ge 0} = \lambda v^\top D v 
= \lambda \sum_i \underbrace{d_{ii}}_{\ge 0} \underbrace{v_i^2}_{\ge 0}$
where
we notice that $\lambda$ cannot be negative in the right-hand side.
\end{proof}

\begin{lemma}\label{th:AB-lemma}
The product of two matrices $A$ and $B$, each with their spectrum in the unit circle, has subunitary singular values. 
\end{lemma}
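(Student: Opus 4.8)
The plan is to argue entirely through the spectral (induced Euclidean) norm. I would use that the largest singular value equals the spectral norm, $\sigma_{\max}(M)=\|M\|_2$, that every singular value is at most $\sigma_{\max}(M)$, and that the spectral norm is sub-multiplicative, $\|AB\|_2\le\|A\|_2\,\|B\|_2$. With these, the claim reduces to proving $\|A\|_2\le1$ and $\|B\|_2\le1$ separately, since then $\sigma_{\max}(AB)=\|AB\|_2\le\|A\|_2\,\|B\|_2\le1$, so every singular value of $AB$ lies in $[0,1]$.

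The only real content is thus the per-factor bound. If ``spectrum in the unit circle'' is read as a bound on the \emph{singular} spectrum, then $\|A\|_2,\|B\|_2\le1$ by definition and the reduction above is already the whole proof. If it is read as a bound on the eigenvalues, the implication ``$\max_i|\lambda_i(M)|\le 1\Rightarrow\|M\|_2\le1$'' holds whenever $M$ is normal — its singular values are then exactly the $|\lambda_i(M)|$ — so the clean statement of the lemma is for normal factors. For the matrices the lemma is actually applied to, I would note that each does have eigenvalues in the unit disk: $\bm{M}^{[k]}_s=(\bm{I}_n+\bm{\hat{P}}^{[k]}_{h,-}\bm{S}^{\top}\bm{R}_h^{-1}\bm{S})^{-1}$ has eigenvalues in $(0,1]$ by Lemma \ref{th:PD-lemma} applied to the diagonal non-negative matrix $\bm{S}^{\top}\bm{R}_h^{-1}\bm{S}$, and $\bm{F}$ of \eqref{eq:KF(AW)GSI_prediction} has spectrum in $[-1,1]$ by the Perron--Frobenius/similarity argument of Proposition \ref{proposition:1}; one then recovers the norm bound by working in an inner product in which these matrices are self-adjoint.

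The step I expect to be the obstacle is exactly this eigenvalue-to-singular-value passage: for non-normal matrices $\rho(M)\le1$ does \emph{not} imply $\|M\|_2\le1$ — a nilpotent Jordan block has spectral radius $0$ but arbitrarily large norm, and a product of two such blocks can exceed norm one — so the lemma is not true verbatim under the eigenvalue reading without a structural hypothesis. The remedy I would adopt is to state the lemma for normal factors, where the reduction above gives a two-line proof, or else to carry out the whole error-recursion computation \eqref{eq:KF(AW)GSI_pred+meas_7} in a weighted inner product in which both $\bm{M}^{[k]}_s$ and $\bm{F}$ are genuine contractions, translating back only at the end; ensuring both factors contract with respect to \emph{one common} inner product is the delicate point.
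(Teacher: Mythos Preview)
Your reduction via the spectral norm and its sub-multiplicativity is exactly the route the paper takes: it also writes $\sigma_{\max}(AB)\le\sigma_{\max}(A)\,\sigma_{\max}(B)$ and then tries to bound each factor's spectral norm by its spectral radius. The paper carries out this last step by expanding an arbitrary unit vector as $x=\sum_i\alpha_i v_i$ with $\alpha_i\ge 0$, $\sum_i\alpha_i\le 1$, $\|v_i\|=1$, the $v_i$ being eigenvectors of $A$, and then chaining the triangle inequality. That is precisely the passage you single out as the obstacle, and your diagnosis is right: such a convex eigenvector expansion is not available in general (the paper's footnote does not justify it), and the implication $\rho(M)\le 1\Rightarrow\sigma_{\max}(M)\le 1$ fails for non-normal $M$, as your Jordan-block example shows. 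Since neither $\bm{F}=\epsilon\bm{I}_n+(1-\epsilon)\bm{\Phi}^{-1}\bm{\Omega}$ nor $\bm{M}^{[k]}_s=(\bm{I}_n+\bm{\hat P}^{[k]}_{h,-}\bm{S}^\top\bm{R}_h^{-1}\bm{S})^{-1}$ is normal in general, the lemma as stated is not true under the eigenvalue reading, and the paper's argument shares the gap you identified rather than filling it.

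Your two proposed repairs are the natural ones. Restricting to normal factors makes the lemma a two-line corollary of $\sigma_i(M)=|\lambda_i(M)|$, but it does not cover the matrices actually used. Your second idea --- running the error recursion \eqref{eq:KF(AW)GSI_pred+meas_7} in a weighted inner product in which both factors are self-adjoint contractions --- is the more honest fix; the delicate point you already name, that a \emph{single} inner product must tame both $\bm{M}^{[k]}_s$ and $\bm{F}$, is exactly where the work lies, and the paper does not address it.
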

\begin{proof}
Let
$\sigma_{\max}(A)= \max\limits_{\norm{x}=1}\norm{Ax}$.
Assuming\footnote{Since an eigenvector may be multiplied with an arbitrary constant, it follows that we consider normed vectors and that the convexity of coefficients $\alpha_i$ holds.} that $x=\sum_i \alpha_iv_i$ with $\alpha_i\geq 0, \sum_i\alpha_i \leq 1$ and $\norm{v_i}=1$ with $v_i$ the eigenvectors of $A$, then the next inequalities hold
$\norm{Ax} = \norm{A\sum\limits_i \alpha_i v_i}= \norm{\sum\limits_i \alpha_i Av_i}
    =\norm{\sum\limits_i \alpha_i\lambda_i v_i}\leq \sum\limits_i |\alpha_i\lambda_i|\cdot \norm{v_i}
    = \sum\limits_i |\alpha_i\lambda_i|\leq \left(\sum\limits_i \alpha_i\right)\cdot |\lambda_{\max}(A)| \leq |\lambda_{\max}(A)|$.
Recalling that
$\norm{AB} \le \norm{A}\norm{B}$
implies that
$\norm{AB} = \sigma_{\max}(AB) \le \sigma_{\max}(A)\sigma_{\max}(B)$
we have that $\sigma_{\max}(AB) \leq |\lambda_{\max}(A)\lambda_{\max}(B)|<1$ concluding the proof.
\end{proof}

\begin{lemma}\label{th:Av-lemma}
Let $A$ be a matrix with subunitary real eigenvalues,
$-1 \le \lambda(A) \le 1$, then $\norm{Av} \le \norm{v}$ holds for any $v$, a vector of appropriate size.
\end{lemma}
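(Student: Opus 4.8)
The plan is to reduce the claim to the standard fact that for a normal (in particular symmetric) matrix the spectral norm equals the spectral radius, which is exactly what is needed here: if $\norm{A}_2 = \rho(A) \le 1$ then $\norm{Av} \le \norm{A}_2\norm{v} \le \norm{v}$. Concretely, I would diagonalize $A = \bm{V}\bm{\Lambda}\bm{V}^{-1}$ and fix a basis of eigenvectors $\{\bm{v}_i\}$ with $\norm{\bm{v}_i} = 1$; because the matrices to which this lemma is applied in \eqref{eq:KF(AW)GSI_pred+meas_7} are assembled from symmetric positive semidefinite blocks (the Kalman covariances and $\bm{S}^{\top}\bm{R}_h^{-1}\bm{S}$), this eigenbasis may be taken orthonormal and $\bm{V}$ orthogonal. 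Expanding an arbitrary vector as $v = \sum_i \alpha_i \bm{v}_i$, orthonormality gives $\norm{v}^2 = \sum_i \alpha_i^2$.

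Next I would apply $A$ term by term, $Av = \sum_i \alpha_i \lambda_i \bm{v}_i$, so that $\norm{Av}^2 = \sum_i \alpha_i^2 \lambda_i^2$ again by orthonormality. The hypothesis $-1 \le \lambda_i \le 1$ yields $\lambda_i^2 \le 1$ for every $i$, hence $\norm{Av}^2 \le \sum_i \alpha_i^2 = \norm{v}^2$, which is the assertion. In the spirit of the proof of Lemma~\ref{th:AB-lemma} one could instead chain $\norm{Av} \le \sum_i \abs{\alpha_i}\abs{\lambda_i} \le \sum_i \abs{\alpha_i}$ and then bound $\sum_i \abs{\alpha_i}$ in terms of $\norm{v}$, but the Pythagorean route above is cleaner and avoids that detour.

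The main obstacle is precisely this relation between the eigen-coordinates and the Euclidean norm, which is where the hidden structural assumption lives: for a generic matrix with real eigenvalues in $[-1,1]$ the statement is false — a strictly triangular matrix has all eigenvalues equal to zero yet arbitrarily large norm — so the lemma holds only once one insists that $A$ be normal/symmetric, or, equivalently in our setting, that it arise as one of the positive-semidefinite-type factors of $\bm{M}^{[k]}_s\bm{F}$ identified via Lemma~\ref{th:PD-lemma}, for which an orthonormal eigenbasis is automatic. With that structure pinned down, the remainder is the two-line computation above.
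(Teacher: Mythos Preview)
Your argument is correct but follows a different line from the paper. The paper's proof simply normalizes $v$, writes $\norm{Av}=\norm{v}\,\norm{A\tilde v}$, and bounds $\norm{A\tilde v}\le\sigma_{\max}(A)$, then invokes the subunitary-eigenvalue hypothesis to conclude $\sigma_{\max}(A)\le 1$ --- effectively recycling the inequality $\sigma_{\max}(A)\le|\lambda_{\max}(A)|$ derived inside Lemma~\ref{th:AB-lemma}. You instead diagonalize in an orthonormal eigenbasis and use the Pythagorean identity $\norm{Av}^2=\sum_i\alpha_i^2\lambda_i^2\le\sum_i\alpha_i^2=\norm{v}^2$, which is cleaner and makes the needed structure explicit. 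Your observation that the lemma is false for a generic matrix with eigenvalues in $[-1,1]$ (the strictly triangular counterexample) is exactly right and is a point on which the paper's own proof is silent; both arguments ultimately rely on the eigenbasis being orthonormal, and the paper leaves that implicit in the convexity footnote of Lemma~\ref{th:AB-lemma}. One caveat on your contextual justification: the matrices $\bm{M}^{[k]}_s\bm{F}$ and $\bm{M}^{[k]}_s(\bm{I}_n-\bm{F})$ to which the lemma is applied are products of a symmetric factor with the stochastic matrix $\bm{F}$, and such products are not themselves symmetric in general, so ``assembled from symmetric blocks'' does not by itself guarantee an orthonormal eigenbasis --- but this is the same structural gap present in the paper, not one you have introduced.
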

\begin{proof}
Let $\tilde{\bm{v}}=v/\norm{v}$ be the normed vector $v$,
then
$\bm{Av} = \bm{A}\norm{\bm{v}} \tilde{\bm{v}}$,
and from the definition
$\sigma_{max}(A) = \max_{\norm{x}=1}\norm{Ax}$
we get
$\norm{Av} = \norm{v}\norm{A\tilde{v}} \le \norm{v} \sigma_{max}(A)$, 
where in the last inequality we used that
$A$ has subunitary eigenvalues.
\end{proof}

\begin{theorem}
The iteration in \eqref{eq:KF(AW)GSI_pred+meas_7} is monotonically decreasing.
\end{theorem}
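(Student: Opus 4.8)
The plan is to read "monotonically decreasing" as: the error produced by \eqref{eq:KF(AW)GSI_pred+meas_7}, measured from its steady‑state value, never grows from one iteration to the next. The whole argument rests on bounding the composite matrix $\bm{M}^{[k]}_s\bm{F}$, and Lemmas \ref{th:PD-lemma}--\ref{th:Av-lemma} are tailored exactly to this. First I would pin down the spectrum of $\bm{M}^{[k]}_s=\left(\bm{I}_n+\bm{\hat{P}}^{[k]}_{h,-}\bm{S}^{\top}\bm{R}_h^{-1}\bm{S}\right)^{-1}$. Since $\bm{\hat{P}}^{[k]}_{h,-}$ is a covariance matrix it is positive semidefinite, and since the measurement noises are independent, $\bm{R}_h$ is diagonal and $\bm{S}$ a row‑selection matrix, so $\bm{S}^{\top}\bm{R}_h^{-1}\bm{S}$ is diagonal with non‑negative entries. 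Lemma \ref{th:PD-lemma} then gives that $\bm{\hat{P}}^{[k]}_{h,-}\bm{S}^{\top}\bm{R}_h^{-1}\bm{S}$ has non‑negative real eigenvalues, hence $\bm{I}_n+\bm{\hat{P}}^{[k]}_{h,-}\bm{S}^{\top}\bm{R}_h^{-1}\bm{S}$ has eigenvalues $\ge 1$ and $\bm{M}^{[k]}_s$ has its spectrum contained in $(0,1]$.

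On the other factor, the proof of Proposition \ref{proposition:1} already established that $\bm{F}$ is a stochastic matrix, so $\rho(\bm{F})=1$ and the spectrum of $\bm{F}$ lies in the closed unit disc. Applying Lemma \ref{th:AB-lemma} with $A=\bm{M}^{[k]}_s$ and $B=\bm{F}$ (both with spectrum in the unit circle) yields that $\bm{M}^{[k]}_s\bm{F}$ has subunitary singular values, so by Lemma \ref{th:Av-lemma}, equivalently by the definition of the spectral norm, $\norm{\bm{M}^{[k]}_s\bm{F}\bm{v}}\le\norm{\bm{v}}$ for every vector $\bm{v}$; that is, the linear part of \eqref{eq:KF(AW)GSI_pred+meas_7} is non‑expansive at each step.

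Next I would center the recursion to remove the affine term. Because $\bm{F},\bm{S},\bm{Q},\bm{R}_h$ are fixed and the sample is static, the covariance recursion converges, $\bm{M}^{[k]}_s\to\bm{M}_s$, and (under observability of the head through $\bm{S}$, i.e. $\rho(\bm{M}_s\bm{F})<1$ so that $\bm{I}_n-\bm{M}_s\bm{F}$ is invertible) the steady‑state equation $\bm{e}^{\infty}=\bm{M}_s\bm{F}\bm{e}^{\infty}+\bm{M}_s(\bm{I}_n-\bm{F})\bm{h}$ has a unique solution $\bm{e}^{\infty}$. Subtracting it from \eqref{eq:KF(AW)GSI_pred+meas_7} cancels the forcing term and leaves $\bm{e}^{[k]}-\bm{e}^{\infty}=\bm{M}_s\bm{F}\,(\bm{e}^{[k-1]}-\bm{e}^{\infty})$, whence $\norm{\bm{e}^{[k]}-\bm{e}^{\infty}}\le\norm{\bm{e}^{[k-1]}-\bm{e}^{\infty}}$: the error decreases monotonically towards its steady‑state value. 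An equivalent route is to subtract two consecutive instances of \eqref{eq:KF(AW)GSI_pred+meas_7}, obtaining $\bm{e}^{[k]}-\bm{e}^{[k-1]}=\bm{M}_s\bm{F}\,(\bm{e}^{[k-1]}-\bm{e}^{[k-2]})$ and concluding that the iteration increments are non‑increasing in norm.

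The main obstacle is precisely this affine term $\bm{M}^{[k]}_s(\bm{I}_n-\bm{F})\bm{h}$ coupled with the step dependence of $\bm{M}^{[k]}_s$: a one‑line contraction bound only controls the homogeneous part, so making "monotone decrease'' precise requires either the centering‑at‑equilibrium device above (needing $\bm{M}^{[k]}_s\to\bm{M}_s$ and $\rho(\bm{M}_s\bm{F})<1$), or the observation that, by construction of the AW‑GSI weights, $\bm{\Psi}=\bm{\Phi}^{-1}\bm{\Omega}$ reproduces to first order any Hazen‑Williams–consistent head vector, so that $(\bm{I}_n-\bm{F})\bm{h}=(1-\epsilon)(\bm{I}_n-\bm{\Psi})\bm{h}$ is of the order of the linearization residual and the recursion is, up to that negligible term, the non‑expansive map $\bm{e}\mapsto\bm{M}_s\bm{F}\bm{e}$. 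A secondary technical point is that Lemmas \ref{th:AB-lemma}--\ref{th:Av-lemma} are phrased for the Euclidean norm and for matrices whose relevant spectral quantities are real; checking that $\bm{M}^{[k]}_s$, although not symmetric, falls within their scope — e.g. via the similarity $\bm{I}_n+\bm{\hat{P}}^{[k]}_{h,-}\bm{S}^{\top}\bm{R}_h^{-1}\bm{S}\sim\bm{I}_n+(\bm{\hat{P}}^{[k]}_{h,-})^{1/2}\bm{S}^{\top}\bm{R}_h^{-1}\bm{S}(\bm{\hat{P}}^{[k]}_{h,-})^{1/2}$ — is where the routine bookkeeping lies.
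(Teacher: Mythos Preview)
Your spectral analysis of $\bm{M}^{[k]}_s$ is essentially identical to the paper's: both use Lemma~\ref{th:PD-lemma} on $\bm{\hat{P}}^{[k]}_{h,-}\bm{S}^{\top}\bm{R}_h^{-1}\bm{S}$, shift by $\bm{I}_n$, invert, and then invoke Lemma~\ref{th:AB-lemma} on the pair $\bm{M}^{[k]}_s,\bm{F}$. The divergence is in how the affine term $\bm{M}^{[k]}_s(\bm{I}_n-\bm{F})\bm{h}$ is handled. The paper does \emph{not} center at a fixed point; instead it applies the triangle inequality directly to \eqref{eq:KF(AW)GSI_pred+meas_7}, bounds both pieces by a common $\bar\sigma<1$ (using Lemma~\ref{th:AB-lemma} also on $\bm{M}^{[k]}_s(\bm{I}_n-\bm{F})$), and iterates to obtain the explicit envelope $\norm{\bm{e}^{[k]}}\le \bar\sigma^{\,k-1}\norm{\bm{e}^{[0]}}+\frac{\bar\sigma-\bar\sigma^{k+1}}{1-\bar\sigma}\norm{\bm{h}}$. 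So the paper's ``monotone decrease'' is really decrease of this upper bound towards $\frac{\bar\sigma}{1-\bar\sigma}\norm{\bm{h}}$, and it works uniformly for the time-varying $\bm{M}^{[k]}_s$ without ever assuming convergence of the Riccati recursion. Your centering device is cleaner and yields a genuine non-expansion $\norm{\bm{e}^{[k]}-\bm{e}^{\infty}}\le\norm{\bm{e}^{[k-1]}-\bm{e}^{\infty}}$, but the price is exactly what you flagged: you need $\bm{M}^{[k]}_s\to\bm{M}_s$ and $\rho(\bm{M}_s\bm{F})<1$ to define $\bm{e}^{\infty}$, neither of which the paper requires or proves. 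Your alternative of differencing consecutive iterates avoids the fixed point but still needs $\bm{M}^{[k]}_s$ to be (eventually) constant for the forcing terms to cancel. In short: same spectral backbone, but the paper trades your equilibrium argument for a cruder triangle-inequality bound that sidesteps the steady-state assumptions altogether.
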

\begin{proof}
Note that in \eqref{eq:KF(AW)GSI_pred+meas_4},
$\hat{\bm{P}}^{[k]}_{h,-}$ is the prior state error covariance matrix as defined in Section~\ref{subsection:sensor_fusion}, which is also a semi-positive definite matrix. Further, matrix $\bm R_h$ is the measurement error covariance matrix,
where the noise is modeled as zero-mean Gaussian noise
drawn iid from the normal distribution as discussed in Section~\ref{subsection:sensor_fusion}.
This implies that $\bm{R}_h$ is diagonal, and thus its inverse is also diagonal. Then, considering that
$\bm{S}$ is the sensorization matrix, the matrix product $\bm{S}^T\bm{R}_h^{-1}\bm{S}$ is also diagonal as it picks the rows and columns pairs in $\bm{R}_h$ corresponding to the WDN sensors.

Applying Lemma~\ref{th:PD-lemma} for matrix $\bm{\hat{P}}^{[k]}_{h,-}\bm{S}^{\top}\bm{R}_h^{-1}\bm{S}$ shows that its eigenvalues are real and non-negative.
Thus, per Gershgorin's circle theorem,
adding $\bm{I}_n$ to this quantity
shifts the spectrum by one
guaranteeing that all eigenvalues are supra-unitary.
Finally the inversion in \eqref{eq:KF(AW)GSI_pred+meas_4}
leads to a matrix with positive and real eigenvalues within the unit circle.

Through the norm triangle inequality property and  Lemma~\ref{th:Av-lemma} we have the next chain of inequalities:
$\norm{\bm{e}^{[k]}} =\norm{\bm{M}^{[k]}_s\bm{F}\bm{e}^{[k-1]}+\left(\bm{I}_n - \bm{F}\right)\bm{h}}
    \leq \norm{\bm{M}^{[k]}_s\bm{F}\bm{e}^{[k-1]}}+\norm{\left(\bm{I}_n - \bm{F}\right)\bm{h}}
    \leq \sigma_{\max}(\bm{M}^{[k]}_s\bm{F})\norm{\bm{e}^{[k-1]}}
        +\sigma_{\max}(\bm{M}^{[k]}_s(\bm{I}_n - \bm{F}))\norm{\bm{h}}
    \leq \bar\sigma \cdot \left(\norm{\bm{e}^{[k-1]}}+ \norm{\bm{h}}\right)$,
where $\bar \sigma$ is a shorthand denoting the upper bound for both $\sigma_{\max}\left(\bm{M}^{[k]}_s\bm{F}\right)$ and $\sigma_{\max}\left(\bm{M}^{[k]}_s\left(\bm{I}_n - \bm{F}\right)\right)$, at any $[k]$. Iterating from $[0]$ to $[k]$ leads to 
$\norm{\bm{e}^{[k]}} \leq \bar \sigma^{k-1} \norm{\bm{e}^{[0]}}+(\bar \sigma+\ldots+\bar \sigma^{k})\norm{\bm{h}}
=\bar \sigma^{k-1} \norm{\bm{e}^{[0]}} +
 (\bar \sigma-\bar \sigma^{k+1})/(1-\bar \sigma)\norm{\bm{h}}$.
Applying Lemma~\ref{th:AB-lemma} for matrices
$\bm{M}^{[k]}_s$ and $\bm{F}$,
both with subunitary real eigenvalues,
we obtain that the product matrices $\bm{M}^{[k]}_s\bm{F}$ and $\bm{M}^{[k]}_s\left(\bm{I}_n - \bm{F}\right)$ also have this property. This allows to state that $0<\bar \sigma < 1$, which introduced in
the last inequality
shows that the estimation error is bounded in the initial error and the term $\bm h$ and that, asymptotically (when $k\rightarrow \infty$), it depends only on the second term, thus concluding the proof.
\end{proof}
\begin{remark}
    The previous theoretical framework allows us to confirm the stability of the required Kalman Filter implementation, composed by a linear state prediction function (through a diffusion matrix as the state prediction matrix) and the availability of only head measurements. In the following sections, the state estimation methodology is extended to include demand measurements, which are related to the system states (hydraulic heads) through a non-linear equation. Therefore, the linear theory in Section \ref{subsection:improving_state_KF} help us to motivate the promising performance of Kalman-based algorithms. The necessary extension to address the problem of inclusion of non-linear measurements is handled with the use of the Unscented Kalman Filter. The analysis of the stability of this algorithm is beyond of the scope of this article, although several results in the state of the art analyze and reduce associated numerical problems leading to instability, such as the development of upgraded algorithms like the square-root UKF \cite{VanDerMerwe2004}, and the development of the UKF-related theory \cite{Menegaz2015}.
    \eor
\end{remark}
\subsection{Including demand measurements}\label{subsection:including_demands} 

The non-linear estimation capabilities of the UKF, explained in Section \ref{paragraph:UKF_preliminaries}, enable the integration of demand measurements in the data assimilation function, as demonstrated with UKF-(AW)GSI in \cite{RomeroBen2024b}. 

In this method, the \textit{prediction} step employs a linear function, simplifying the prediction process due to its equivalency with the prediction operation in the KF for a linear prediction function $\mbox{\textbf{f}}$. This equivalence arises from the SUT's ability to perfectly capture linearity without requiring further approximations \cite{Simon2006}: 
\begin{equation}\label{eq:UKF-GSI_pred_steps}
\bm{\hat{h}}^{[k]}_{-} = \bm{F}\bm{\hat{h}}^{[k-1]},\ \ 
\bm{\hat{P}}^{[k]}_{h,-} = \bm{F}\bm{\hat{P}}^{[k-1]}_h\bm{F}^{\top} + \bm{Q}_h 
\end{equation}
\noindent where $\bm{F}$ and $\bm{Q}_h$ are respectively the state transition and process noise covariance matrices for the prediction in \eqref{eq:KF(AW)GSI_prediction}. Note that $\bm{u}^{[k-1]}$ is removed because there is no input in the process. In our case, the weighting parameter $\epsilon$ in the definition of $\bm{F}$ is defined as $\epsilon = \frac{n_c}{n}$, where $n_c$ is the number of AMRs. This decision is justified by the importance of demand information within the estimation process. If demand data is abundant, the relative weight of the data assimilation process should be higher, and thus we should maintain the state from the previous iteration. Nonetheless, if demand data is scarce, the prediction function can make use of the diffusion matrix to extend the corrected state estimates to unobserved regions.

Based on UKF,
a non-linear expression appears in the \textit{measurement propagation} step:

\small 
\begin{subequations}
\label{eq:UKF-GSI_meas_prop_steps}
\begin{align}
\label{eq:UKF-GSI_meas_prop_steps_a}
 \bm{\mathcal{H}}^{[k]}_- &= \begin{bmatrix}
    \bm{\hat{h}}^{[k]}_- & \bm{\hat{h}}^{[k]}_- \pm \eta\sqrt{\bm{\hat{P}}^{[k]}_{h,-}} 
\end{bmatrix}\\
\label{eq:UKF-GSI_meas_prop_steps_b} \bm{\mathcal{Y}}^{[k]}_{-} &= \mbox{\textbf{g}}(\bm{\mathcal{H}}^{[k]}_{-}) = \begin{bmatrix}
    \bm{S}\bm{\mathcal{H}}^{[k]}_{-} \\ -\left(\hat{\bm{B}}^{[k]}_c\right)^{\top}\left(\bm{T}^{-1}\hat{\bm{B}}^{[k]}\bm{\mathcal{H}}^{[k]}_{-}\right)^{\frac{1}{1.852}}
\end{bmatrix} \\
\label{eq:UKF-GSI_meas_prop_steps_c} \bm{\hat{y}}^{[k]}_{-} &= \sum_{i=0}^{2n} w_i^{(m)}\bm{\mathcal{Y}}^{[k]}_{i,-} \\
\label{eq:UKF-GSI_meas_prop_steps_d} \bm{\hat{P}}^{[k]}_{yy} &= \sum_{i=0}^{2n} w_i^{(c)}\left(\bm{\mathcal{Y}}^{[k]}_{i,-} - \bm{\hat{y}}^{[k]}_{-}\right)\left(\bm{\mathcal{Y}}^{[k]}_{i,-} - \bm{\hat{y}}^{[k]}_{-}\right)^{\top} + \bm{R}_h \\
\label{eq:UKF-GSI_meas_prop_steps_e} \bm{\hat{P}}^{[k]}_{xy} &= \sum_{i=0}^{2n} w_i^{(c)}\left(\bm{\mathcal{H}}^{[k]}_{i,-} - \bm{\hat{h}}^{[k]}_{-}\right)\left(\bm{\mathcal{Y}}^{[k]}_{i,-} - \bm{\hat{y}}^{[k]}_{-}\right)^{\top}
\end{align}
\end{subequations}
\normalsize

\noindent where $\bm{\mathcal{H}}^{[k]}_{-}$ is the set of head prior "sigma" points and $\hat{\bm{B}}^{[k]}$ is an approximated incidence matrix, 
computed considering the current hydraulic state of the network as:

\begin{equation}\label{eq:UKF(AW)GSI_incidence}
    \hat{b}^{[k]}_{oj}=\begin{cases}
    -1,& \hat{h}_i^{[k]}\geq  \hat{h}_j^{[k]}\;\; (\mathscr{e}_o = (\mathscr{v}_i,\mathscr{v}_j)\in \mathcal{E}) \\ 
    \hphantom{-}1,& \hat{h}_i^{[k]}< \hat{h}_j^{[k]} \;\; (\mathscr{e}_o = (\mathscr{v}_j,\mathscr{v}_i)\in \mathcal{E})\\ \hphantom{-}0,& (\mathscr{v}_i,\mathscr{v}_j)\notin \mathcal{E}\; \mbox{and} \;(\mathscr{v}_j,\mathscr{v}_i)\notin \mathcal{E}\end{cases}
\end{equation}

Moreover, $\bm{T}\in\mathbb{R}^{m\times m}$ is the resistance coefficient diagonal matrix, whose diagonal is formed by $\bm{\tau}$, defined in Section \ref{subsubsection:GSI}. The specific non-linear relation in \eqref{eq:UKF-GSI_meas_prop_steps_b} is derived from the non-linear flow-head relation, defined by expressions such as the Hazen-Williams equation, posed as:

\begin{equation}\label{eq:HW}
    \bm{q}^{[k]} = \left(\bm{T}^{-1}\bm{B}^{[k]}\bm{h}^{[k]}\right)^{\frac{1}{1.852}},
\end{equation} 

\noindent and the linear demand-flow relation, defined as a mass conservation equation at the network nodes, namely $\bm{c}^{[k]} = -\left(\hat{\bm{B}}^{[k]}_c\right)^{\top}\bm{q}^{[k]}$, where $\bm{c}^{[k]}$ is the nodal demand vector and $\hat{\bm{B}}^{[k]}_c$ is a submatrix of $\hat{\bm{B}}^{[k]}$ where only the columns corresponding to the nodes with ARMs are selected. Let us remark that the application of the non-linear part of $\mbox{\textbf{g}}$ to the prior "sigma" points $\bm{\mathcal{H}}^{[k]}_{-}$ in \eqref{eq:UKF-GSI_meas_prop_steps_b} implies the individual application of this equation to each "sigma" point, to then stack the obtained measurement "sigma" points horizontally.

Finally, the \textit{correction} step is done analogously to UKF:
\begin{subequations}
\label{eq:UKF-GSI_correc_steps}
\begin{align}
\label{eq:UKF_correc_steps_a}
 \bm{K}_h^{[k]} &= \bm{\hat{P}}^{[k]}_{xy}(\bm{\hat{P}}^{[k]}_{yy})^{-1} \\
\label{eq:UKF_correc_steps_b} \bm{\hat{h}}^{[k]} &= \bm{\hat{h}}^{[k]}_{-} + \bm{K}_h^{[k]}\left(\bm{z}_h - \bm{\hat{y}}^{[k]}_{-}\right) \\
\label{eq:UKF_correc_steps_c} \bm{\hat{P}}_h^{[k]} &= \bm{\hat{P}}^{[k]}_{h,-} - \bm{K}_h^{[k]}\bm{\hat{P}}^{[k]}_{yy}(\bm{K}_h^{[k]})^{\top} 
\end{align}
\end{subequations}

\noindent where $\bm{z}_h = \begin{bmatrix} \bm{h}_s^{\top} & \bm{c}_a^{\top} \end{bmatrix}^{\top}$ is the vector of actual measurements, and $\bm{c}_a$ is the vector of demand measurements. Again, note that $\bm{z}_h$ is constant during the complete run of the estimator.

\subsection{Improving flow estimation}\label{subsection:improving_flow} 

The presented UKF-based scheme for the integration of demand measurements showed its potential in \cite{RomeroBen2024b}. The proposed algorithm only considers the estimation of hydraulic heads, which play the role of the network state representative. Nevertheless, the proper estimation of other hydraulic variables such as flow could be of great importance for water utilities in network control operations. Therefore, we propose an extension to the methodology explained in Section \ref{subsection:including_demands} to integrate both head and flow estimation. Note that the estimation of multiple multi-dimensional variables has been studied in the past within the context of UKF. Notably, this method was extended in \cite{VanDerMerwe2004} for the estimation of both state and parameters, with a particular focus on learning-based schemes. This extension can be implemented in two different ways, depending on how the additional variable is integrated within the state estimation procedure. 

On the one hand, we can stack the variables to estimate a single joint state vector. In our case, this would translate into the incorporation of the flow vector into the state vector, which was already containing the head vector, leading to a composed state in the form of $\bm{x}^{[k]} = \begin{bmatrix}
        (\bm{h}^{[k]})^{\top} & (\bm{q}^{[k]})^{\top}
    \end{bmatrix}^{\top}$. 
On the other hand, the dual implementation uses separated head and flow state representations, and therefore implements two different estimators, one for heads and another for flows. When both approaches are compared, we can conclude that the joint strategy should be more precise because it considers cross-covariance terms in the error covariance matrix, while the dual approach neglects this information \cite{Nelson2000}. However, in practice, the difference in performance is not significant, and the dual version leads to other advantages, such as reduced computational cost. Thus, we propose a dual UKF/KF strategy, with the UKF estimator handling head reconstruction and a linear KF estimator helping in the computation of the flows. 

The flow KF-based estimator is composed of the steps presented in Section \ref{subsubsection:KF} with the \textit{prediction} step given by:
\begin{equation}\label{eq:KFFlow_pred}
\bm{\hat{q}}^{[k]}_{-} = \bm{\hat{q}}^{[k-1]}, \ \ 
\bm{\hat{P}}^{[k]}_{q,-} = \bm{\hat{P}}^{[k-1]}_q + \bm{Q}_q
\end{equation}
\noindent where $\bm{\hat{q}}^{[k-1]}$ is the estimated flow state and $\bm{\hat{P}}^{[k-1]}_q$ is the flow state error covariance matrix, with $\bm{Q}_q$ being the associated process noise covariance matrix. The flow state is simply maintained across iterations regarding prediction, as the purpose of this estimator is the synchronization of the measured flows with the virtual flows from the head UKF-based estimator.

The \textit{measurement update} step is given by:
\begin{subequations}
\label{eq:KFFlow_meas}
\begin{align}
    \label{eq:KFFlow_meas_a}
    \bm{K}_q^{[k]} &= \bm{\hat{P}}^{[k]}_{q,-}\bm{G}_q^{\top}\left(\bm{G}_q\bm{\hat{P}}^{[k]}_{q,-}\bm{G}_q^{\top} + \bm{R}_q\right)^{-1} \\
    \label{eq:KFFlow_meas_b}
    \bm{\hat{q}}^{[k]} &= \bm{\hat{q}}^{[k]}_{-} + \bm{K}_q^{[k]}\left(\bm{z}_q - \bm{G}_q\bm{\hat{q}}^{[k]}_{-}\right) \\
    \label{eq:KFFlow_meas_c} \bm{\hat{P}}^{[k]}_q &= \left(\bm{I}_n - \bm{K}_q^{[k]}\bm{G}_q\right)\bm{\hat{P}}^{[k]}_{q,-}
\end{align}
\end{subequations}

\noindent where $\bm{G}_q = \left[\bm{S}_q^{\top} \;\;\;\bm{I}_{|\mathcal{E}|}^{\top}\right]^{\top}$ is the measurement matrix, $\bm{K}_q^{[k]}$ is the Kalman gain, $\bm{R}_q$ is the associated measurement noise covariance matrix and $\bm{z}_q$ is the flow measurement vector. To integrate the head estimation from the UKF-based head estimator, a column vector
\begin{equation}\label{eq:z_q}
    \bm{z}_q = \begin{bmatrix} \bm{q}_s^\top & \left(\left(\bm{T}^{-1}\bm{\hat{B}}^{[k]}\bm{\hat{h}}^{[k]}\right)^{\frac{1}{1.852}}\right)^\top\end{bmatrix}^\top
\end{equation}
is constructed, where $\bm{q}_s$ denotes the measured flows, and the second entry computes the virtual flows derived from the estimated heads $\bm{\hat{h}}^{[k]}$ (note that $\bm{\hat{B}}^{[k]}$ must be computed to be consistent with the heads). 

\begin{remark}
    In the pipes with flow sensors, we would be imposing two values to the same state, that is, one from the real measurement and one from the virtual measurement. In order to prioritize the values from the network, a higher degree of confidence must be given to these measurements with the aid of the measurement noise covariance matrix. \eor
\end{remark}

Finally, regarding the UKF-based estimation process, the equations of the \textit{prediction} \eqref{eq:UKF-GSI_pred_steps}, \textit{measurement propagation} \eqref{eq:UKF-GSI_meas_prop_steps} and \textit{correction \eqref{eq:UKF-GSI_correc_steps}} are maintained except for the non-linear function application in \eqref{eq:UKF-GSI_meas_prop_steps_b}, which must be updated to consider the virtual measurements from the flow estimator:

\small
\begin{equation}\label{eq:dualUKF_meas_prop}
    \bm{\mathcal{Y}}^{[k]}_{-} = \mbox{\textbf{g}}(\bm{\mathcal{H}}^{[k]}_{-}) = \begin{bmatrix}
    \bm{S}\bm{\mathcal{H}}^{[k]}_{-} \\ -\left(\hat{\bm{B}}^{[k]}_c\right)^{\top}\left(\bm{T}^{-1}\hat{\bm{B}}^{[k]}\bm{\mathcal{H}}^{[k]}_{-}\right)^{\frac{1}{1.852}} \\
    \left(\bm{T}^{-1}\hat{\bm{B}}^{[k]}\bm{\mathcal{H}}^{[k]}\right)^{\frac{1}{1.852}}.
\end{bmatrix}
\end{equation}
\normalsize
Thus, the head estimator's measurement vector is defined as: 

\begin{equation}\label{eq:z_h}
    \bm{z}_h = \begin{bmatrix} \bm{h}_s^\top &
    \bm{c}_a^\top & \left(\bm{q}^{[k]}\right)^\top\end{bmatrix}^\top
\end{equation}

\noindent with $\bm{q}^{[k]}$ coming from the flow KF-based estimator.

\subsection{Methodology overview}

The head-flow estimation method, denoted as Dual UKF-(AW)GSI or D-UKF-(AW)GSI, is detailed in Algorithm \ref{alg:D-UKF-(AW)GSI}. 
\begin{algorithm}[t]
\small
\caption{Dual UKF-(AW)GSI.}\label{alg:D-UKF-(AW)GSI}
\begin{algorithmic}[1]
\REQUIRE $\bm{h}_{0}, \bm{F}, \bm{h}_s, \bm{c}_a, \bm{q}_s, \bm{S}, \bm{G}_q, \mathcal{C}, \bm{T}, \bm{P}_{0}^h, \bm{P}_{0}^q, \bm{Q}_h, \bm{R}_h, \bm{Q}_q, \bm{R}_q,$ $\bm{w}^{(m)}, \bm{w}^{(c)}, \eta, k_{D}$
\STATE Compute $\hat{\bm{B}}_{0}$ from $\bm{h}_{0}$ using \eqref{eq:UKF(AW)GSI_incidence}
\STATE Compute $\bm{q}_{0}$ from $\bm{h}_{0}, \hat{\bm{B}}_{0}$ and $\bm{T}$ using \eqref{eq:HW} 
\STATE Set $\bm{z}_h$ from $\bm{h}_s, \bm{c}_a, \bm{q}_0$ using \eqref{eq:z_h} and $\bm{z}_q$ from $\bm{q}_s, \bm{h}_0$ using \eqref{eq:z_q}
\STATE Initialize $\bm{\hat{h}}^{[0]} = \bm{h}_0$, $\bm{\hat{P}}^{[0]}_h = \bm{P}_{0}^h$, $\bm{\hat{q}}^{[0]} = \bm{q}_0$, $\bm{\hat{P}}^{[0]}_q = \bm{P}_{0}^q$, $k=1$ and $\delta_{conv} = $ False
\WHILE{$\delta_{conv} = $ False}
\STATE  Get $\bm{\hat{h}}^{[k]}_{-},\bm{\hat{P}}^{[k]}_{h,-}$ from $\bm{F}, \bm{\hat{h}}^{[k-1]},\bm{\hat{P}}^{[k-1]}_h,\bm{Q}_h$ using \eqref{eq:UKF-GSI_pred_steps} \\ 
\STATE Compute $\bm{\hat{q}}^{[k]}_{-},\bm{\hat{P}}^{[k]}_{q,-}$ from $\bm{\hat{q}}^{[k-1]},\bm{\hat{P}}_q^{[k-1]},\bm{Q}_q$ using \eqref{eq:KFFlow_pred}\\
\STATE Compute $\bm{\hat{B}}^{[k]}_{-}$ from $\bm{\hat{h}}^{[k]}_{-}$ using \eqref{eq:UKF(AW)GSI_incidence} \\
\STATE Extract $\bm{\hat{B}}^{[k]}_{c,-}$ from $\bm{\hat{B}}^{[k]}_{-}$ using $\mathcal{C}$ \\
\STATE Obtain $\bm{\mathcal{H}}^{[k]}_-$ from $\bm{\hat{h}}^{[k]}_{-},\bm{\hat{P}}^{[k]}_{h,-},\eta$ using \eqref{eq:UKF-GSI_meas_prop_steps_a}
\STATE Compute $\bm{\mathcal{Y}}^{[k]}_{-}$ from $\bm{\mathcal{H}}^{[k]}_-,\bm{S},\bm{T},\bm{\hat{B}}^{[k]}_{-},\bm{\hat{B}}^{[k]}_{c,-}$ using \eqref{eq:dualUKF_meas_prop}
\STATE Get $\bm{\hat{y}}^{[k]}_{-},\bm{\hat{P}}^{[k]}_{yy}, \bm{\hat{P}}^{[k]}_{xy}$ from $\bm{\mathcal{Y}}^{[k]}_{-},\bm{w}^{(m)}, \bm{w}^{(c)}, \bm{R}_h$ using \eqref{eq:UKF-GSI_meas_prop_steps_c}-\eqref{eq:UKF-GSI_meas_prop_steps_e} \\
\STATE Compute $\bm{\hat{h}}^{[k]}, \bm{\hat{P}}^{[k]}_h$ from $\bm{\hat{h}}^{[k]}_{-},\bm{\hat{P}}^{[k]}_{h,-}, \bm{z}_h, \bm{\hat{y}}^{[k]}_{-}, \bm{\hat{P}}^{[k]}_{yy}, \bm{\hat{P}}^{[k]}_{xy}$ using \eqref{eq:UKF-GSI_correc_steps}\\
\STATE Get $\bm{\hat{q}}^{[k]}, \bm{\hat{P}}^{[k]}_q$ from $\bm{\hat{q}}^{[k]}_{-},\bm{\hat{P}}^{[k]}_{q,-}, \bm{z}_q, \bm{G}_q, \bm{R}_q$ using \eqref{eq:KFFlow_meas}
\IF{$\mbox{mod}(k,k_{D}) = 0$}
     \STATE Compute $\bm{\hat{B}}^{[k]}$ from $\bm{\hat{h}}^{[k]}$ using \eqref{eq:UKF(AW)GSI_incidence}
     \STATE Set $\bm{z}_h$ from $\bm{h}_s, \bm{c}_a, \bm{\hat{q}}^{[k]}$ using \eqref{eq:z_h} and $\bm{z}_q$ from $\bm{q}_s, \bm{\hat{h}}^{[k]}$ using \eqref{eq:z_q}
\ENDIF
\STATE $\delta_{conv} =$ convergence\_criteria($\bm{\hat{h}}^{[k]},\bm{\hat{h}}^{[k-1]},\bm{\hat{q}}^{[k]},\bm{\hat{q}}^{[k-1]}$)
\STATE $k = k + 1$
\ENDWHILE
\RETURN $\bm{h}^{[k]}, \bm{q}^{[k]}$
\end{algorithmic}
\end{algorithm}
First,
in order to define the initial measurement vectors $\bm{z}_h$ and $\bm{z}_q$ (step 3), the flow associated to initial heads $\bm{q}_0$ needs to be computed (steps 1-2). After the initialization of state and covariance matrix for both estimators, iteration counter and convergence flag (step 4), the estimation loop begins (step 5). The \textit{prediction} process (steps 6-7) is analogue for both head and flow estimators. Then, the \textit{measurement propagation} step of the head UKF (steps 7-12) is required to apply the non-linear measurement and get the variables involved in the \textit{correction} step (step 13). The flow KF-based estimation ends at the current iteration with the \textit{measurement update} (step 14), unless the condition related to the virtual measurement update is fulfilled (step 15). In this case, each measurement vector is updated considering the state of the other estimator (steps 16-17). Finally, the convergence criteria is checked, finalizing the process if the conditions are met (step 19).


\begin{remark}
    The UKF-(AW)GSI method can also be represented by Algorithm \ref{alg:D-UKF-(AW)GSI} if the processes related to the flow KF are disabled (steps 1-2, 7, 14) or modified (steps 3-4, 19), the virtual measurements update are removed (steps 15-18) and the application of the non-linear function (step 11) is modified to use \eqref{eq:UKF-GSI_meas_prop_steps_b} instead of \eqref{eq:dualUKF_meas_prop}. \eor
\end{remark}

\begin{remark}
    Regarding computational complexity, Algorithm \ref{alg:D-UKF-(AW)GSI} can be analyzed considering the head and flow estimators separately. The head estimation UKF is primarily influenced by the matrix inversion and square-root operations. Both of them are usually solved through methods such as Cholesky decomposition, with a complexity of $\mathcal{O}(\mathfrak{n}^3)$, with $\mathfrak{n}$ being the size of the matrix. In this case the inversed matrix is $\bm{\hat{P}}_{yy}^{[k]}$, which has a dimension of $n_{\bm{\hat{P}}_{yy}^h} = n_s + |\mathcal{C}| + m$, while the square-root computation is performed on $\bm{\hat{P}}_{h,-}^{[k]}$, with a size $n_{\bm{\hat{P}}_{h,-}^{[k]}} = n$. Additionally, the complexity of the flow estimation KF is mainly governed by the matrix inversion, which is required for a matrix with the size of $\bm{R}_q$, i.e., $n_{\bm{R}q} = n_q + m$, where $n_q$ is the dimension of the flow measurements vector. Thus, the overall complexity of D-UKF-(AW)GSI is $\mathcal{O}(\mathfrak{n}^3)$, with the computation cost depending on the actual values of the network size and sensorization properties.  
    \eor
\end{remark}



\section{Case study}\label{section:case_study}
Hereinafter, the L-TOWN benchmark from the Battle of
Leakage Detection and Isolation Methods 2020 - BattLeDIM2020 \cite{Vrachimis2022}, shown in Fig.~\ref{fig:L-TOWN}, is used to test the state estimation and leak localization algorithms. The network has 782 junctions, 905 pipes, 2 reservoirs or water inlets and one tank. The water utility divides the WDN in three areas, depending on the elevation of the junctions:
    area A (green in Fig.~\ref{fig:L-TOWN}) has 655 nodes, with 29 pressure sensors,  the area is fed through two reservoirs; 
    area B (blue in Fig.~\ref{fig:L-TOWN}) has 31 nodes, with only one pressure sensor;  it is connected to Area A and receives water through a pressure reduction valve (PRV);
    area C (yellow in Fig.~\ref{fig:L-TOWN}) has 92 nodes, with only 3 pressure sensors but 82 Automated Meter Reading (AMR) consumption sensors; water is fed to it through a tank, filled from Area A. 
\begin{figure}[t]
\centering
\includegraphics[width=\linewidth]{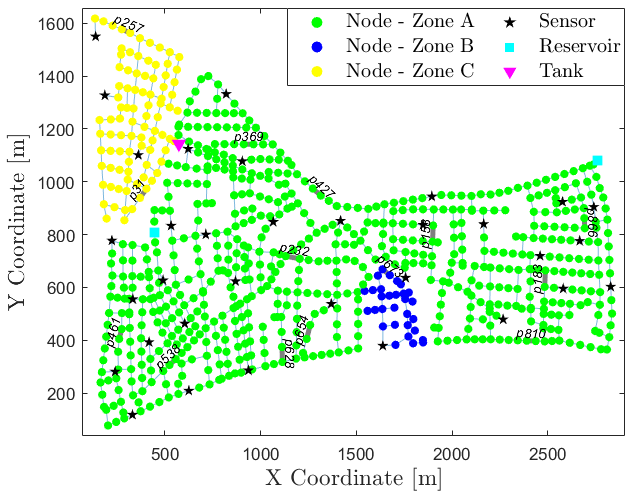}
\caption{Schematic representation of the L-TOWN water network. The different Areas are represented through different colors (Area A - green, Area B - blue, Area C - yellow), and the reservoirs, tank and sensors are indicated through squared, triangular and star markers. The leaks that occurred in 2018 are labelled and marked over the corresponding pipe.}
\label{fig:L-TOWN}
\end{figure}
BattLeDIM2020 was a leak detection and localization competition in which several teams analyzed the leaks along the year 2019 through SCADA measurements from pressure, demand and flow sensors. The organization provided a calibration dataset from the previous year (2018) with several a priori located leaks. The pair of datatsets provided by the competition organizers contain 33 pressure readings and 82 consumption measurements, together with 3 flow sensors (at the exit of reservoirs and tank) and 1 level sensor at the tank.

\section{Results \& Discussion\protect\footnotemark}\label{section:results}

\footnotetext{Code and data: \url{https://github.com/luisromeroben/D-UKF-AW-GSI}}

To assess the performance of Dual UKF-(AW)GSI, we compare it with existing interpolation methods. Numerical results are analyzed from two perspectives: the accuracy of the estimation process and the leak localization performance. To carry out this comparison, the leaks from the 2018 dataset are used, to reduce the number of overlapping leaks and thus the complexity for the leak localization procedure.
\begin{figure}[t]
\centering
\includegraphics[width=.9\linewidth]{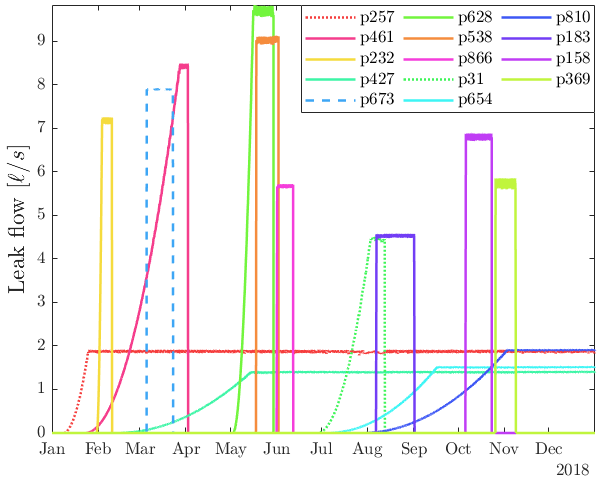}
\caption{Evolution of the leak rate of the leakage occurred during 2018. Continuous lines indicate leaks that appeared in Area A, whereas dotted lines indicate leaks in Area B and dashed lines show leaks in Area C.}
\label{fig:leaks_2018}
\end{figure}
\subsection{Estimation}

The evaluation of the estimation capabilities of the proposed UKF-based methods is performed by comparing the reconstruction error of these methods to the baselines from previous works, i.e., GSI and AW-GSI. Since Area A is the largest within the network and contains the vast majority of the leaks occurring in 2018, it will be subsequently used to assess the performance of our methodology. Nonetheless, the original dataset has no demand measurements from Area A. To run our algorithms, a re-calculation of the 2018 dataset is done through the Dataset Generator\footnote{\url{https://github.com/KIOS-Research/BattLeDIM}} provided by the organizers. Thus, we can recreate the conditions of the original 2018 dataset while adding AMRs in Area A. Several key aspects must be considered.
    First, the water inlets are isolated from the rest of Area A through a PRV, which maintains a constant head of 75 m at the node immediately downstream, while reservoirs have a water height of 100 m. This abrupt discrepancy of hydraulic heads between reservoirs and the rest of the area degrades the UKF performance. To address this, the network graph structure is adjusted by removing the reservoirs, and setting the nodes immediately downstream of the PRVs as the new "water inlets", with a head of 76 m, to ensure (as required from the GSI / AW-GSI perspective) that their head values are the highest in the network. 
    Second,
    in the presented experiments, we consider that 100 AMRs exist in Area A. They are placed through a model-free sensor placement methodology \cite{RomeroBen2022sp}.
    And finally, due to the higher computational cost of the UKF (in comparison with non-UKF methods), some measures are considered to reduce computation time. Specifically, 100 leaks are considered out of all potential network leaks. They are selected using the same method employed to get the AMR locations, but ensuring that these locations are not selected as leak nodes too. Moreover, only one time instant is considered for each leak, highly reducing the associated computations.

Estimation results are presented in Fig.~\ref{fig:head_estimation} and Fig.~\ref{fig:flow_estimation}. Each figure represents the distribution of the estimation error for their respective hydraulic variable, that is, head and flow. The error is computed through the root-squared-mean error,
$RMSE(\bm x, \hat{\bm x}) = \sqrt{\frac{1}{n} \sum_{i=1}^{n} (x_i - \hat{x}_i)^2}$, where $\bm{x}$ is a generic vector.
The results clearly show how GSI and AW-GSI lead to a similar performance, with the latter producing a better flow estimation. A great improvement is achieved for both head and flow estimation when demand measurements are integrated through the UKF-based approach. Note that D-UKF-(AW)GSI performs similarly to UKF-(AW)GSI in terms of head estimation. Nonetheless, the dual approach leads to an enhanced flow estimation.
\begin{figure}[t]
\centering
\includegraphics[width=\linewidth]{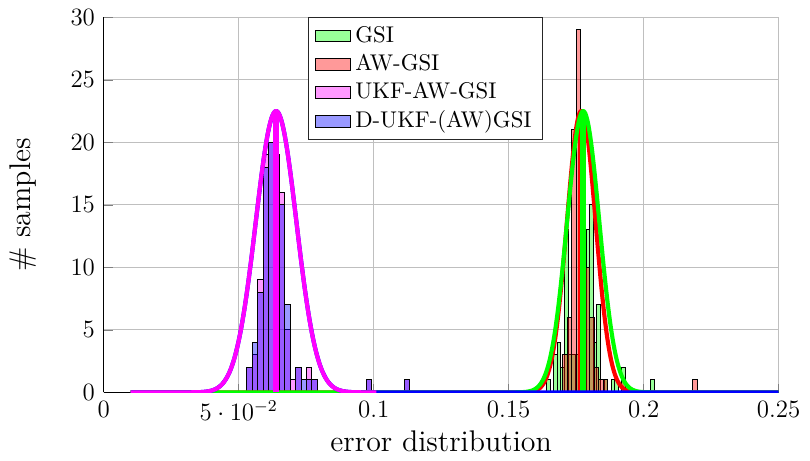}
\caption{Head estimation RMSE of GSI, AW-GSI, UKF-(AW)GSI and D-UKF-(AW)GSI for 100 considered leak scenarios.}
\label{fig:head_estimation}
\end{figure}
\begin{figure}[t]
\centering
\includegraphics[width=\linewidth]{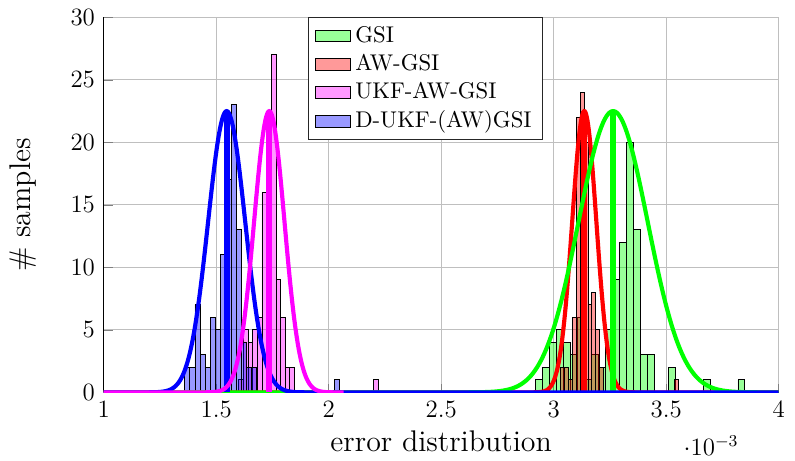}
\caption{Flow estimation RMSE of GSI, AW-GSI, UKF-(AW)GSI and D-UKF-(AW)GSI for 100 considered leak scenarios.}
\label{fig:flow_estimation}
\end{figure}
The results in these figures is complemented with the numerical results in Table \ref{table:estimation_results} where the mean and standard deviation of the RMSE estimation error for all the considered leaks is presented.
\begin{table}[t] 
    \centering
    \resizebox{\linewidth}{!}{
    \begin{tabular}{cc|cccc}
        \multicolumn{2}{c|}{\multirow{2}{*}{\textbf{Case}}} & \multicolumn{4}{c}{\textbf{Estimation RMSE ($\mu\pm\sigma$)}} \\
         \multicolumn{2}{c|}{}& (A) & (B) & (C) & (D)  \\
        \hline
        \multirow{2}{*}{(I)} & $h$ (cm) &
        $82.19 \pm 5.73$ & $81.12\pm5.59$ & $48.67\pm5.74$ & $48.62\pm5.65$ \\
        & $q$ $(\ell/s)$ & 
        $13.43\pm0.48$ & $13.15\pm0.29$ & $4.65\pm1.63$ & $3.54\pm0.11$ \\
        \hline
        \multirow{2}{*}{(II)} & $h$ (cm) &
        $17.76\pm0.60$ & $17.69\pm0.53$ & $6.39\pm0.75$ & $6.39\pm0.75$ \\
        & $q$ $(\ell/s)$& 
        $3.26\pm0.15$ & $3.14\pm0.05$ & $1.73\pm0.07$ & $1.55\pm0.08$ \\
    \end{tabular}%
    }
    \vspace{3pt}
    \caption{Estimation performance in Modena (I) and L-TOWN (II) for GSI (A), AW-GSI (B), UKF-(AW)GSI (C) and D-UKF-(AW)GSI (D).}
    \label{table:estimation_results}
\end{table}
Apart from the L-TOWN benchmark, the well-known Modena case study \cite{Bragalli2012,Alves2021} is added to enrich the results and show the consistency of the improvements, considering that this benchmark was already used in \cite{RomeroBen2024b}, where the properties of the case study and the evaluated data are introduced. For Modena, all the leaks are analyzed, and as in the case of L-TOWN, only one time instant is considered to reduce computation time.

\subsection{Localization}




\noindent\textbf{Area A.}
The performance of localization experiments in Area A required several adjustments. 
    First and again, the water inlets are considered at the nodes just after the PRVs associated to each reservoir. Moreover, the same 100 AMRs are considered for these experiments.
    Secondly, for each leak, a period of 24 hours is selected, with a sampling of 30 minutes. The values are chosen such as to minimize leak overlap, as per Fig.~\ref{fig:leaks_2018}.
    And finally, only medium-to-large leaks are studied, avoiding background leaks (specifically, \textit{p427}, \textit{p654} and \textit{p810}), whose effect is negligible and masked by the main leaks.  

The UKF's 
        SUT parameters are $\alpha=10^{-3}$ and $\beta = 2$, both typical in the literature \cite{Julier1997}.
        The process and measurement noise covariance matrices are $\bm{Q} = \bm{I}_{n^*}$ and $\bm{R} = 10^{-4}\bm{I}_{n_s+n_c}$, where $n^*$ is the number of nodes in Area A neglecting the original reservoirs, and $n_c$ is the number of AMRs in Area A. In this way, higher degree of certainty is given to the measurements.
        The initial state is computed through AW-GSI and the state error covariance matrix is given as $\bm{P}_0 = \bm{I}_{n^*}$.

Tables \ref{table:UKFAWGSIvsAWGSI_area} and \ref{table:UKFAWGSIvsAWGSI_node}
present performance metrics for each leak, using a set of key performance indicators (KPI) for area-level and node-level localization, respectively. Area-level localization provides a set of network elements as the possible leak locations, defining an area where the potential leak is hidden. Node-level localization provides a single network element as the leak localization result.
The following KPIs are used in this work.
    $b_c$, a boolean variable indicating if the leak has been included within a set of high-likelihood candidates by the localization algorithm. This set is defined by thresholding the Leak Candidate Selection Method (LCSM) metric, which in this case must be $\ge 0.7$ for the corresponding pipe\footnote{The LCSM metric is computed for nodes. The associated LCSM metric to each pipe is given by the average LCSM metric of the nodes which are endpoints of the corresponding pipe.} to be included in the high-likelihood candidates set. The LCSM metric is normalized in the [0,1] range. 
    $\bar{d}^{c2l}$ denotes the weighted averaged pipe distance (in meters) from the pipes in the high-likelihood candidates set to the actual leaky pipe. The distance between two distinct pipes is
    $d^{c2l}_{k} = \frac{1}{4}\sum_{u=i_k}^{j_k}\sum_{v=i_l}^{j_l} \mbox{shortest\_path}(u,v)$
    where $\mathscr{e}_{k} = \mathscr{e}_{i_k j_k} = (\mathscr{v}_{i_k},\mathscr{v}_{j_k})$ is the $k$-$th$ candidate pipe, $\mathscr{e}_{l} = \mathscr{e}_{i_l j_l} = (\mathscr{v}_{i_l},\mathscr{v}_{j_l})$ is the leaky pipe, and shortest\_path($\cdot$) is a function computing the shortest-path distance between two nodes. 
    With this, the actual KPI is computed as $\bar{d}^{c2l} = \left(\bm{\tilde{w}}^{LM}\right)^{\top}\bm{d}^{c2l}$
    where $\bm{\tilde{w}}^{LM} = \frac{\bm{w}^{LM}}{\sum_{k=1}^{n_{cands}} w_k^{LM}}$ is a column vector with the localization metric values associated to the candidates, and $n_{cands}$ is the number of candidate pipes. 
    $\bar{p}^{c2l}$ is computed in the same way as $\bar{d}^{c2l}$, but with the distance expressed in number of pipes. 
    $\rho_c$ gives a measure of the search area (with respect to the specific area of the WDN, namely Area A, B or C). It is calculated as the number of high-likelihood candidate pipes per number of pipes within the area. 
    $d_{best}^{c2l}$ expresses the pipe distance (in meters) from the actual leak to the candidate with the highest associated LCSM metric. Thus, the node-level performance of the method can be analyzed, with this KPI representing the actual node-level localization error.
    $p_{best}^{c2l}$ denotes the same metric as $d_{best}^{c2l}$, but expressed in number of pipes.
\begin{table}[t] 
    \centering
    \begin{tabular}{ccccccccc}
         & \multicolumn{2}{c}{$\bm{b_{c}}$} & \multicolumn{2}{c}{$\bm{\bar{d}^{c2l}}$ \textbf{(m)}} & \multicolumn{2}{c}{$\bm{\bar{p}^{c2l}}$ (\textbf{\# pipes)}} & \multicolumn{2}{c}{$\bm{\rho_c}$ \textbf{(\%)}} \\
         & (I) & (II) &  (I) & (II) &  (I) & (II) &  (I) & (II) \\
        \hline
         p461 & 
        1 & 1 & 
        227.58 & 238.47 & 
        4.92 & 5.18 & 
        7.61 & 7.48  \\
         p232 & 
        1 & 1 &  
        385.06 & 395.09 & 
        8.02 & 8.45 & 
        16.80 & 21.39 \\
         p628 & 
        1 & 1 & 
        339.83 & 359.52 & 
        6.65 & 7.15 & 
        14.04 & 19.16  \\
         p538 & 
        0 & 0 & 
        477.48 & 495.14 & 
        9.95 & 10.25 & 
        11.42 & 11.55  \\
         p866 & 
        0 & 1 &  
        236.76 & 232.41 & 
        5.07 & 4.95 & 
        5.25 & 5.51  \\
         p183 & 
        1 & 1 &
        308.14 & 340.25 &
        7.18 & 7.84 &
        13.12 & 17.72 \\
         p158 & 
        1 & 1 & 
        176.41 & 174.93 & 
        3.42 & 3.38 & 
        6.56 & 6.69 \\
         p369 & 
        0 & 0 & 
        448.14 & 490.97 & 
        9.24 & 10.19 & 
        4.07 & 9.45 \\
        \bottomrule
        \multicolumn{3}{c}{AVERAGE} & 
        324.93 & 340.85 &  
        6.81 & 7.18 & 
        9.86 & 12.37  \\
    \end{tabular}
    \caption{Area A area-level localization: AW-GSI (I), UKF-AWGSI (II)}
    \label{table:UKFAWGSIvsAWGSI_area}
\end{table} 
\begin{table}[t] 
    \centering
    \begin{tabular}{ccccccc}
         & \multicolumn{2}{c}{$\bm{d^{c2l}_{best}}$ \textbf{(m)}} & \multicolumn{2}{c}{$\bm{p}^{c2l}_{best}$ (\textbf{\# pipes)}} \\
         & (I) & (II) & (I) & (II) \\
        \hline
         p461 &
        334.00 & 293.53 &
        8.00 & 7.00  \\
         p232 & 
        516.77 & 427.41 &
        11.00 & 9.00 \\
         p628 & 
        401.94 & 375.74 &
        8.00 & 7.00   \\
         p538 & 
        465.20 & 320.16 &
        9.00 & 6.00  \\
         p866 & 
        240.78 & 214.40 &
        6.00 & 5.00 \\
         p183 & 
        181.53 & 211.78 &
        5.00 & 5.00  \\
         p158 & 
        44.32 & 44.32 &
        1.00 & 1.00 \\
         p369 & 
        502.74 & 406.81 &
        10.00 & 8.00 \\
        \bottomrule
        AVERAGE & 
        335.91 & 286.77 &
        7.25 & 6.00 \\
    \end{tabular}
    \caption{Area A node-level localization: AW-GSI (I), UKF-AWGSI (II)}
    \label{table:UKFAWGSIvsAWGSI_node}
\end{table} 

It is important to note that the results in Tables \ref{table:UKFAWGSIvsAWGSI_area} and \ref{table:UKFAWGSIvsAWGSI_node} compare AW-GSI with UKF-(AW)GSI. The non-dual version is selected here because its head estimation performance is nearly identical to that of D-UKF-(AW)GSI (whose main advantage lies in flow estimation), while the computing cost is lower. Given the extensive number of leaks and time instants analyzed, UKF-(AW)GSI was chosen to minimize computational demands while preserving performance.

The results allow several conclusions.
    Regarding area-level localization, the performances are comparable, although AW-GSI shows slightly better results. In terms of $b_c$, UKF-(AW)GSI provides a satisfactory result in an additional scenario in comparison to AW-GSI, namely \textit{p866}. Regarding the candidate-to-leak distance (both in meters and number of pipes), it is higher for the UKF-based strategy, with around a 5\% increase. However, note that search area is also higher in UKF-(AW)GSI ($\sim$25\% increase). Thus, this increase of the localization error could be explained by the inclusion of extra nodes in the high-likelihood candidate set.
    Also, the node-level localization shows the main advantage of improving the estimation process. The distance from the best candidate to the leak is reduced around a 15\% when the demand information is integrated through the UKF-based strategy. 

These results demonstrate the importance of making the most of all the available sources of information within the network. UKF-(AW)GSI allows to integrate additional types of measurements, leading to a better node-level performance (which is one of the typical drawbacks of data-driven schemes) while maintaining a comparable area-level performance.
\\

\noindent\textbf{Area C.} Regarding this area, 
the BattLeDIM2020 datasets actually include AMR data. Thus, the method can be tested without the need of computing additional data from the leaks of 2018. The reduced size of Area C allows us to increase the number of analyzed samples, to consider a 5-minute sensor sampling and to run the main UKF loop for 200 iterations.
\begin{table}[t] 
    \centering
    \begin{tabular}{ccccccccc}
         & \multicolumn{2}{c}{$\bm{b_{c}}$} & \multicolumn{2}{c}{$\bm{\bar{d}^{c2l}}$ \textbf{(m)}} & \multicolumn{2}{c}{$\bm{\bar{p}^{c2l}}$ (\textbf{\# pipes)}} & \multicolumn{2}{c}{$\bm{\rho_c}$ \textbf{(\%)}} \\
         & (I) & (II) &  (I) & (II) &  (I) & (II) &  (I) & (II) \\
        \hline
         p257 & 
        1 & 1 & 
        176.12 & 152.86 & 
        3.28 & 2.83 & 
        25.69 & 20.18  \\
         p31 & 
        0 & 0 &  
        618.22 & 577.15 & 
        12.15 & 11.28 & 
        14.68 & 9.17 \\
    \end{tabular}
    \caption{Area C area-level localization: AW-GSI (I), UKF-AWGSI (II)}
    \label{table:UKFAWGSIvsAWGSI_area_AreaC}
\end{table} 
\begin{table}[t] 
    \centering
    \begin{tabular}{ccccccc}
         & \multicolumn{2}{c}{$\bm{d^{c2l}_{best}}$ \textbf{(m)}} & \multicolumn{2}{c}{$\bm{p}^{c2l}_{best}$ (\textbf{\# pipes)}} \\
         & (I) & (II) & (I) & (II) \\
        \hline
         p257 &
        105.26 & 222.27 &
        2.00 & 4.00  \\
         p31 & 
        623.15 & 194.22 &
        12.00 & 4.00 \\
    \end{tabular}
    \caption{Area C node-level localization: AW-GSI (I), UKF-AWGSI (II)}
    \label{table:UKFAWGSIvsAWGSI_node_AreaC}
\end{table} 

The obtained results, shown in Tables \ref{table:UKFAWGSIvsAWGSI_area_AreaC} and~\ref{table:UKFAWGSIvsAWGSI_node_AreaC}, allow the following conclusions.
    For leak \textit{p257}, both methods provide satisfactory performance, although UKF-(AW)GSI-LCSM exhibits some node-level accuracy degradation. However, the UKF-based method behaves slightly better when considering area-level performance. This indicates that the method selects a proper set of candidates, but that the rank of these candidates in terms of their LCSM metric is degraded with respect to AW-GSI-LCSM. Nonetheless, leak \textit{p257} is a background leak, thus the results are less reliable than for medium-to-large leaks.
    Also, notably the difference in performance happens when a multi-leak scenario is addressed, considering that leak \textit{p31} occurs while leak \textit{p257} is still active. In this scenario, a slight improvement is observed in terms of area-level performance when using the UKF-based method. Besides, node-level performance is largely improved, with a distance error of 12 pipes for AW-GSI-LCSM and only 4 pipes for UKF-(AW)GSI-LCSM. In terms of $d_{best}^{c2l}$, this implies a 67\% reduction of the distance between the best candidate and the leak. 

In order to completely analyze the multi-leak event, graphical localization results are shown for AW-GSI-LCSM and UKF-(AW)GSI-LCSM in Fig.~\ref{fig:leak_localization_AreaC_AWGSI} and Fig.~\ref{fig:leak_localization_AreaC_UKFAWGSI}, respectively. 
\begin{figure}[t]
     \centering
    \includegraphics[width=\linewidth]{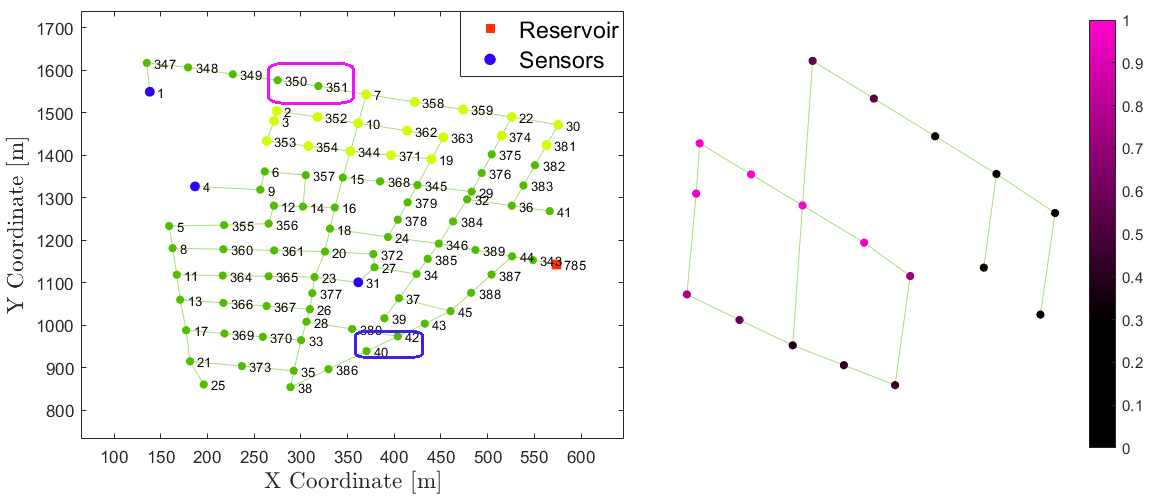}
        \caption{AW-GSI-LCSM: Area C localization for leak at pipe 31.}
    \label{fig:leak_localization_AreaC_AWGSI}
\end{figure}
\begin{figure}[t]
    \centering
    \includegraphics[width=\linewidth]{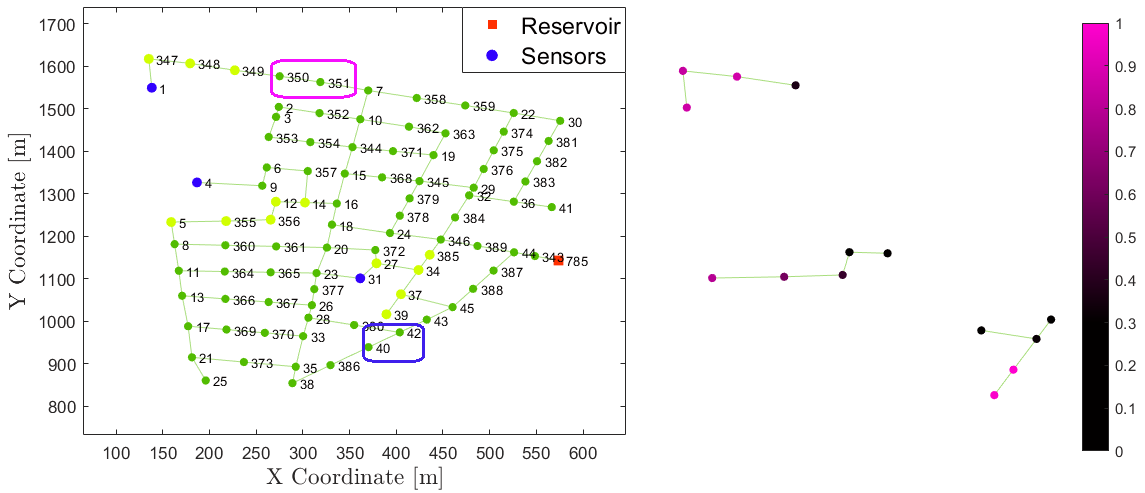}
    \caption{UKF-(AW)GSI-LCSM: Area C localization for leak at pipe 31.}
    \label{fig:leak_localization_AreaC_UKFAWGSI}
\end{figure}
Each figure is divided in two subplots: the left plot corresponds to the general view, and the right plot corresponds to the detailed view. The general view shows a graph of Area C, marking the pressure sensors with blue dots and the water inlet (tank) with red squares. LCSM candidates are indicated using lighter nodes (in comparison to the rest of the nodes). Occurring leaks are enclosed by magenta (leak \textit{p257}) and blue (leak \textit{p31}) lines. The detailed perspective only shows the candidate nodes, using a colormap (indicated through the colorbar in the right) to denote the likelihood of each node as the leak source, with light magenta indicating higher probability. 

Fig.~\ref{fig:leak_localization_AreaC_AWGSI} shows that AW-GSI-LCSM is not capable to correctly localize the larger leak (\textit{p31}), yielding an intermediate result between this leak and the background leak in \textit{p257}. Instead, Fig.~\ref{fig:leak_localization_AreaC_UKFAWGSI} shows that the localization using UKF-(AW)GSI-LCSM is successful: the most likely leak origins are located close to \textit{p31}. Additionally, the background leak is at only 3 pipes from the second most likely leak area, allowing to conclude that the multi-leak scenario is properly solved. This performance highlights even more the power of exploiting extra sources of information such as nodal consumptions, if they are available.

\section{Conclusions \& Future work}\label{section:conclusions}

This article presents a dual hydraulic state estimation method to retrieve both head and flow vectors from an initial guess and different types of measurements, with the aim of improving leak localization. The initial guess can be obtained from interpolation methods such as GSI or AW-GSI, and measurements can include pressure, flow and demand, which are fused through a UKF-based strategy.

The approach is mainly tested in the well-known open-source L-TOWN benchmark from BattLeDIM2020. Both estimation and localization are improved with respect to previous state-of-the-art state estimation/leak localization methods. Regarding localization, the UKF-based methods show promising performance handling multi-leak events, which are typically challenging. The estimation capabilities are also evaluated in the case study of Modena, aiding to show the consistency of the improvement over different networks.

Several improvements to the method can be explored in future works. First, the UKF's power to handle non-linear relationships could be leveraged to integrate the presence of active network elements, such as valves or pumps. Then, the prediction function of both head and flow estimators may be revisited in order to improve the transition between iterations. Moreover, other sensor fusion methods can be evaluated with the aim of enhancing estimation, such as different non-linear KF-based methods or Factor Graph Optimization (FGO).

\bibliographystyle{IEEEtran}
\bibliography{references}













\begin{IEEEbiography}[{\includegraphics[width=1in,height=1.25in,clip,keepaspectratio]{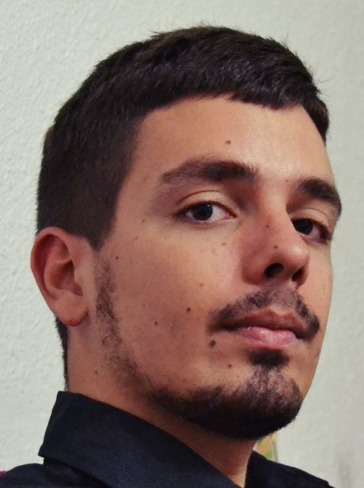}}]{Luis Romero-Ben}
is currently a PhD student and researcher at the Automatic Control group at the Institut de Robòtica e Informàtica Industrial (CSIC-UPC), Spain. His interests are focused on the development and application of data-driven methodologies for the control and monitoring of water distribution networks and urban drainage systems.
\end{IEEEbiography}

\begin{IEEEbiography}[{\includegraphics[width=1in,height=1.25in,clip,keepaspectratio]{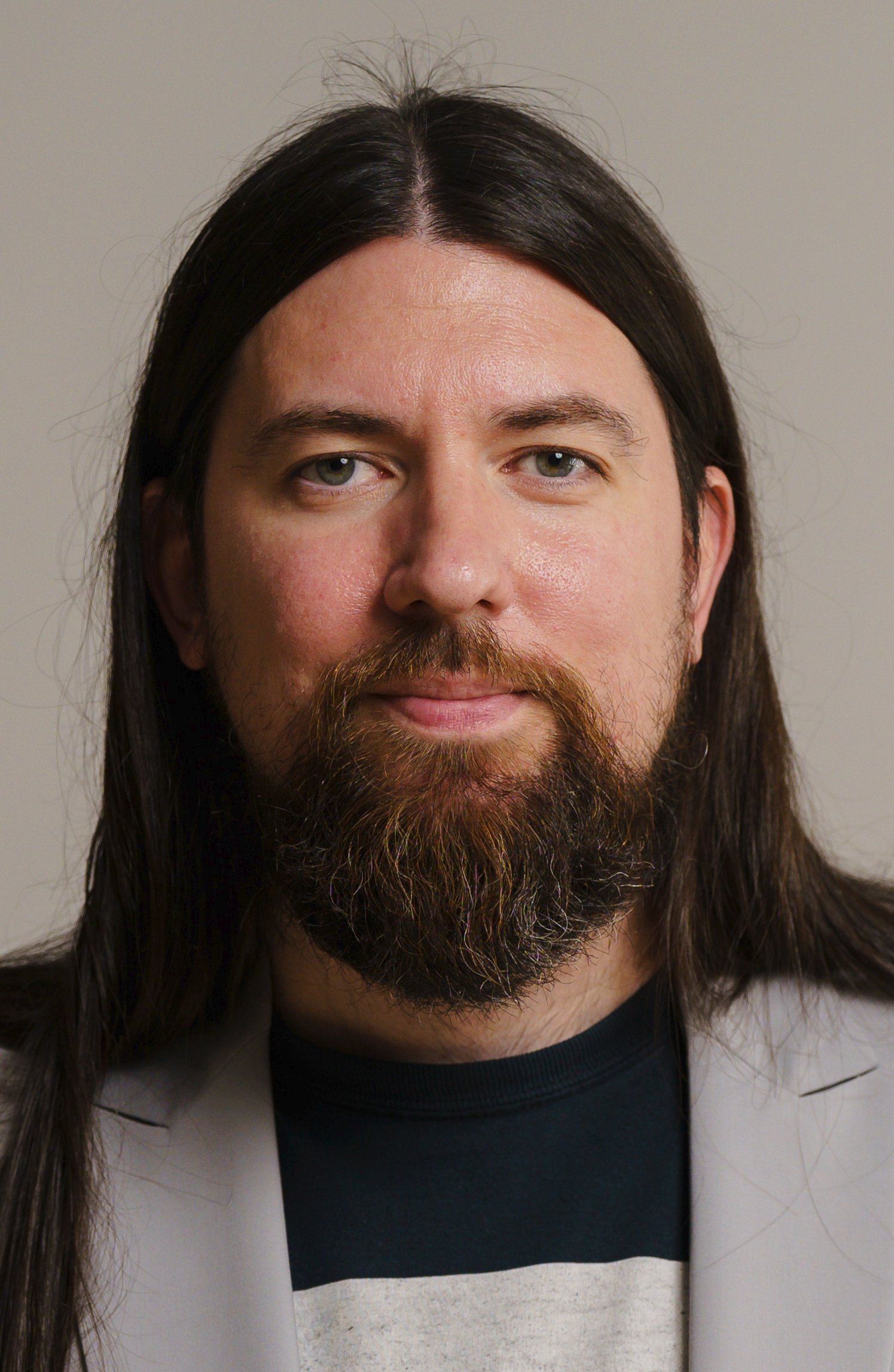}}]{Paul Irofti}
is an Associate Professor
within the Computer Science Department
of the Faculty of Mathematics and Computer Science
at the University of Bucharest
He is the co-author of the book “Dictionary Learning Algorithms and Applications” (Springer 2018)
awarded by the Romanian Academy.
He is PhD in Systems Engineering at the Politehnica University of
Bucharest since 2016.
His interests are anomaly detection, signal processing, 
numerical algorithms and optimization.

\end{IEEEbiography}

\begin{IEEEbiography}[{\includegraphics[width=1in,height=1.25in,clip,keepaspectratio]{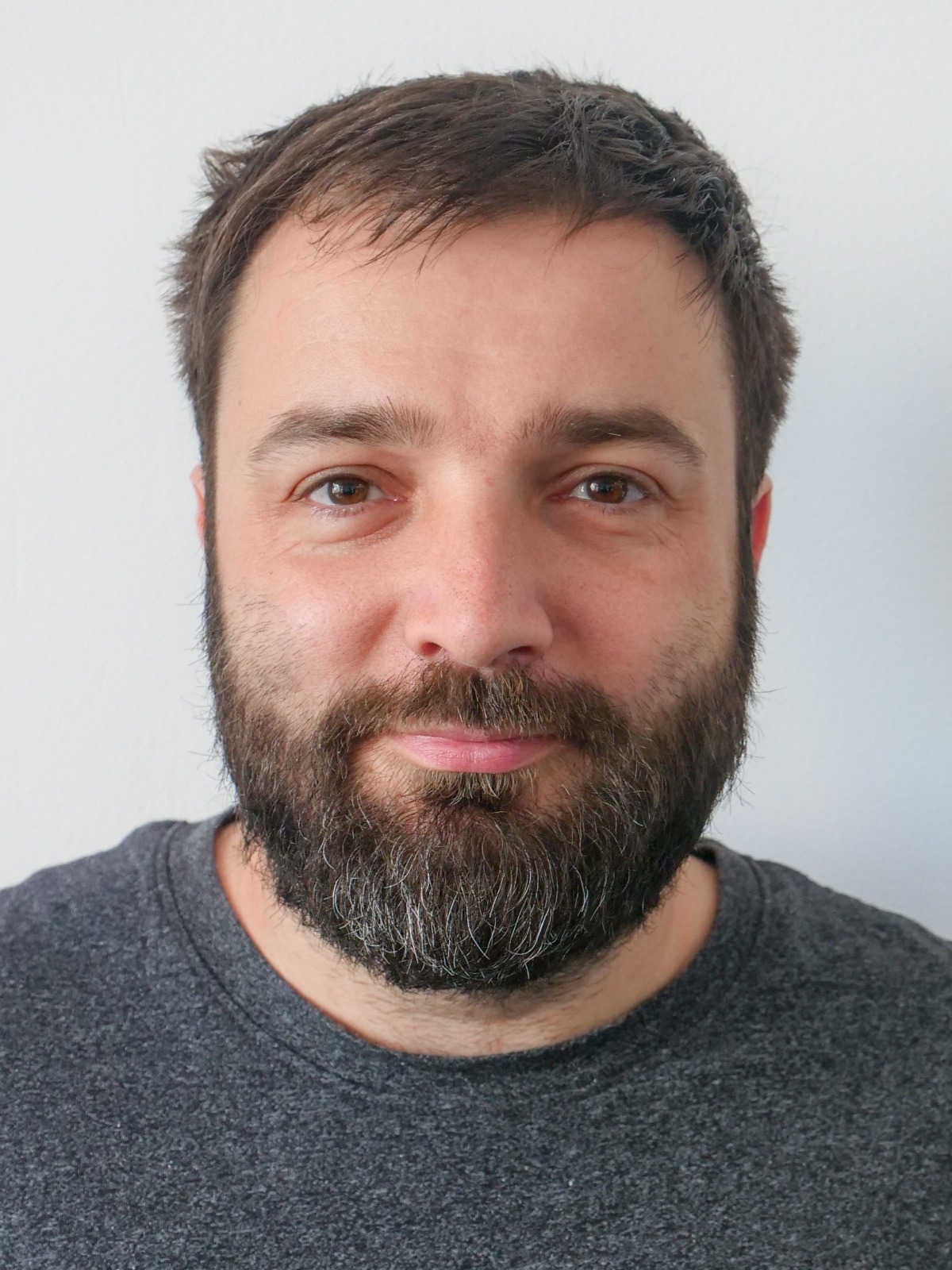}}]{Florin Stoican}
 is Professor in the department of Automatic Control and Systems Engineering, Politehnica University of Bucharest. He obtained his PhD in Control Engineering in 2011 from Supelec (now CentraleSupelec), France with an application of set-theoretic methods for fault detection and isolation. His interests are constrained optimization control, set theoretic methods, fault tolerant control, mixed integer programming, motion planning.
\end{IEEEbiography}

\begin{IEEEbiography}[{\includegraphics[width=1in,height=1.25in,clip,keepaspectratio]{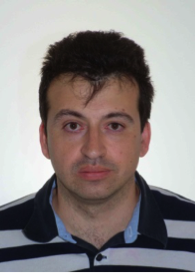}}]{Vicenç Puig}
 holds a PhD in Automatic Control, Vision and Robotics and is the leader of the research group Advanced Control Systems (SAC) at the Polytechnic University of Catalonia.
 He has important scientific contributions in the areas of fault diagnosis and fault tolerant control using interval models. He participated in more than 20 international and national research projects in the last decade. He led many private contracts, and published more than 80 articles in JCR journals and more than 350 in international conference/workshop proceedings. Prof. Puig supervised over 20 PhD theses and over 50 MA/BA theses.
\end{IEEEbiography}

\vfill

\end{document}